\newcommand{\vscriptalign}[1]{#1\vphantom{\meaning#1}}
\newcommand{\cu}[1]{
	\ifcat\noexpand#1\relax
	\bm{#1}
	\else
	\mathbf{#1}
	\fi
}
\newcommand{\diff}{\mathop{}\!\mathrm{d}}
\newcommand{\imag}{\mathrm{i}}
\newcommand{\cond}{{\;|\;}}
\let\sup\relax
\let\lim\relax
\DeclareMathOperator*{\sup}{sup\,}  
\DeclareMathOperator*{\lim}{lim\,}  
\newcommand{\expecsym}{\operatorname{\mathbb{E}}}     
\newcommand{\covsym}{\operatorname{Cov}}     
\newcommand{\varrsym}{\operatorname{Var}}     
\newcommand{\diagsym}{\operatorname{diag}}     
\newcommand{\tracesym}{\operatorname{tr}}           
\let\expec\relax
\let\cov\relax
\let\varr\relax
\let\diag\relax
\let\trace\relax
\newcommand{\expec}{\@ifstar{\@expecauto}{\@expecnoauto}}
\newcommand{\@expecauto}[1]{\expecsym \left[ #1 \right]}
\newcommand{\@expecnoauto}[1]{\expecsym [#1]}
\newcommand{\cov}{\@ifstar{\@covauto}{\@covnoauto}}
\newcommand{\@covauto}[1]{\covsym \left[ #1 \right]}
\newcommand{\@covnoauto}[1]{\covsym [#1]}
\newcommand{\varr}{\@ifstar{\@varrauto}{\@varrnoauto}}
\newcommand{\@varrauto}[1]{\varrsym \left[ #1 \right]}
\newcommand{\@varrnoauto}[1]{\varrsym [#1]}
\newcommand{\diag}{\@ifstar{\@diagauto}{\@diagnoauto}}
\newcommand{\@diagauto}[1]{\diagsym \left( #1 \right)}
\newcommand{\@diagnoauto}[1]{\diagsym (#1)}
\newcommand{\trace}{\@ifstar{\@traceauto}{\@tracenoauto}}
\newcommand{\@traceauto}[1]{\tracesym \left( #1 \right)}
\newcommand{\@tracenoauto}[1]{\tracesym (#1)}
\newcommand{\traceBig}[1]{\tracesym \Bigl( #1 \Bigr)}
\newcommand*{\trans}{{\mkern-1.5mu\mathsf{T}}}
\newcommand*{\R}{\mathbb{R}} 
\newcommand*{\PP}{\mathbb{P}} 
\let\norm\relax
\DeclarePairedDelimiter{\normbracket}{\lVert}{\rVert}
\newcommand{\norm}{\normbracket}
\newcommand{\normBigg}[1]{\Biggl\lVert #1 \Biggr\rVert}
\let\innerp\relax
\DeclarePairedDelimiter{\innerpbracket}{\langle}{\rangle}
\newcommand{\innerp}{\innerpbracket}
\let\abs\relax
\DeclarePairedDelimiter{\absbracket}{\lvert}{\rvert}
\newcommand{\abs}{\absbracket}
\newcommand{\absbig}[1]{\bigl \lvert #1 \bigr\rvert}
\newcommand{\absbigg}[1]{\biggl \lvert #1 \biggr\rvert}
\newcommand{\absBigg}[1]{\Biggl \lvert #1 \Biggr\rvert}
\newcommand{\hessian}{\mathrm{H}}
	\newtheorem{envcounter}{EnvcounterDummy}[\thmenvcounter]
	\newtheorem{proposition}[envcounter]{Proposition}
	\newtheorem{lemma}[envcounter]{Lemma}
	\newtheorem{remark}[envcounter]{Remark}
	\newtheorem{example}[envcounter]{Example}
	\newtheorem{assumption}[envcounter]{Assumption}
\title{Stochastic filtering with moment representation
	\thanks{Submitted to the editors DATE.
		\funding{This research was partially supported by the Wallenberg AI, Autonomous Systems and Software Program (WASP) funded by Knut and Alice Wallenberg Foundation. The computations handling was enabled by resources provided by the National Academic Infrastructure for Supercomputing in Sweden (NAISS) and the Swedish National Infrastructure for Computing (SNIC) partially funded by the Swedish Research Council through grant agreements no. 2022-06725 and no. 2018-05973.}}
}
\author{Zheng Zhao\thanks{Department of Information Technology, Uppsala University, Sweden (\email{zheng.zhao@it.uu.se}).}
	\and Juha Sarmavuori\thanks{Department of Electrical Engineering and Automation, Aalto University, Finland (\email{juha.sarmavuori@aalto.fi}).}
}
\begin{document}
	
	\maketitle
	
	\begin{abstract}
		Stochastic filtering refers to estimating the probability distribution of the latent stochastic process conditioned on the observed measurements in time. In this paper, we introduce a new class of convergent filters that represent the filtering distributions by their moments. The key enablement is a quadrature method that uses orthonormal polynomials spanned by the moments. We prove that this moment-based filter is asymptotically exact in the order of moments, and show that the filter is also computationally efficient and is in line with the state of the art.
	\end{abstract}
	
	\begin{keywords}
		Stochastic filtering, state space, moment, characteristic function, Gaussian quadrature, Kalman filter, stochastic differential equation
	\end{keywords}
	
	\begin{MSCcodes}
		60G35, 62M05, 62M20, 65D32, 65C60
	\end{MSCcodes}
	
	\section{Introduction}
	\label{sec:intro}
	In this manuscript, we study the filtering problem concerned with models of the form
	\begin{equation}
		\begin{split}
			\diff X(t) &= a(X(t)) \diff t + b(X(t)) \diff W(t),\\
			X_0 &\sim \PP_{X_0}, \\
			Y_k \cond X_k &\sim p_{Y_k \cond X_k},
			\label{equ:model-formulation}
		\end{split}
	\end{equation}
	where the process $\lbrace X(t) \in \R^d \colon t\geq 0 \rbrace$ solves the It\^{o} stochastic differential equation (SDE) defined by a standard Wiener process $\lbrace W(t) \in \R^{d_w} \colon t\geq 0 \rbrace$, drift function $a \colon \R^d \to \R^d$, dispersion function $b \colon \R^d \to \R^{d \times d_w}$, and initial distribution $\PP_{X_0}$. The random variable $Y_k \in \R^{d_y}$ stands for the measurement of $X_k \coloneqq X(t_k)$ at any discrete time $t_k$ following a given conditional probability density function (PDF) $p_{Y_k \cond X_k}$. In addition, if $X$ is a discrete-time process instead of a solution to the SDE above, we only require that the conditional expectation $\expec{g(X_k) \cond X_{k-1}}$ is computable for any polynomial $g$.\looseness=-1 
	
	The filtering problem refers to solving the probability distribution $\PP_{X_k \cond Y_{1:k}}$ of $X_k$ conditioned on the collection of measurements $Y_{1:k} \coloneqq \lbrace Y_1, Y_2, \ldots, Y_k \rbrace$ for $k=1,2,\ldots$. This is a classical problem, and it is known to be challenging to compute the exact solution except for some isolated models. In the literature, there are plenty of approximate methods, such as Gaussian filters~\cite{Kazufumi2000} and particle filters~\cite{Chopin2020}, which are arguably the most popular ones. The principle of Gaussian filters (e.g., extended Kalman filters and Gauss--Hermite filters) is to approximate the filtering distribution $\PP_{X_k \cond Y_{1:k}}$ by a Gaussian so that the filtering problem boils down to only computing the mean and covariance of $\PP_{X_k \cond Y_{1:k}}$, which are usually efficient to compute. However, Gaussian approximations do not converge to the actual distribution that is non-Gaussian. In a different flavour, particle filters resort to approximate the filtering distribution by weighted samples, and then make use of sequential Monte Carlo techniques to estimate these weights and samples recursively in time. Under mild system conditions, the particle filters are convergent in the number of samples~\cite{Bain2009, Chopin2020}. However, to converge fast, we need a large number of samples, which in turn make the filtering routine computationally demanding and memory-consuming. It is also a common problem that sequential Monte Carlo methods can produce degenerate or impoverished weights, and solving such problems often incurs additional computations.
	
	There are also convergent filters by approximating the solution to the Kushner--Stratonovich equation. Examples are projection filters~\cite{Brigo1998, Emzir2023} which project the filtering densities onto finite-dimensional manifolds. However, these filters are primarily concerned with continuous-time measurements, while we focus on the discrete-time setting. Hence, we do not relate them to the scope of this paper. For detailed reviews of stochastic filters and their properties, we refer the readers to, for instance, \cite{Jazwinski1970, Bain2009, Sarkka2013, Law2015}.

	\subsection{Contributions}
	\label{sec:contributions}
	We introduce a new class of asymptotically exact and efficient filters to solve the filtering problem in Equation~\eqref{equ:model-formulation}. Specifically, we represent the filtering distribution $\PP_{X_k \cond Y_{1:k}}$ by a sequence of its moments, and then we recursively estimate this sequence for $k=1,2,\ldots$ by using a moment-based quadrature method. To expose the idea, let us suppose that $X$ is unidimensional (i.e., $d=1$). Then, at each time $t_k$, we use a sequence of $2 \, N$ moments
	\begin{equation}
		\begin{split}
			M_k^N &\coloneqq \lbrace m_{k, 0},  m_{k, 1},  m_{k, 2}, \ldots, m_{k, 2 \, N - 1} \rbrace, \\
			m_{k, n} &\coloneqq \expec{X_k^n \cond Y_{1:k}} \coloneqq \int x^n \diff \PP_{X_k \cond Y_{1:k}}(x),
			\label{equ:moments-definition-1d}
		\end{split}
	\end{equation}
	to approximately represent $\PP_{X_k \cond Y_{1:k}}$. Note that $m_{k, 0} = 1$ by definition. This moment-based representation converges in distribution as $N\to\infty$, if the target distribution $\PP_{X_k \cond Y_{1:k}}$ is determined by its moments (see, e.g., \cite[Chap. 15]{AchimKlenke2014} for sufficient conditions). Now suppose that we explicitly know the moments $M_0^N$ of the initial $\PP_{X_0 \cond Y_{1:0}} \coloneqq \PP_{X_0}$. We show a quadrature method so that we can use $M_0^N$ to approximate the moments $M_1^N$ of the next filtering distribution $\PP_{X_1 \cond Y_{1:1}}$, denoted by $\widehat{M}_1^N$. Likewise, we then continue to compute $\widehat{M}_k^N$ based on $\widehat{M}_{k-1}^N$, and so forth for $k = 2,3,\ldots$. This quadrature method generates the quadrature rules based on the orthonormal polynomials spanned by the moments, which is in a similar spirit as the Golub--Welsch approach~\cite{Golub1969}.
	
	Our proposed moment filter is a significant contribution to the community in terms of convergence and computation. We prove that at any time $t_k$, the approximate moments $\widehat{M}_k^N$ converge to the true moments as $N\to\infty$ under mild conditions of the system. Moreover, if the true filtering distributions are determined by their moments, then the discrete measures generated by the approximate moments and the quadrature method converge weakly to the truth too. In addition, the moment filter simultaneously gives a consistent and differentiable likelihood approximation which we can use to estimate the model parameters by maximum likelihood. The experiments show that the moment filter converges numerically as $N\to\infty$, and that the convergence speed is substantially faster than that of standard particle filters. While controlling the particle filter to have a similar computation time as the moment filter, the moment filter is significantly more accurate than the particle filter by a few orders of magnitude. 

	\subsection{Structure}
	\label{sec:structure}
	The paper is structured as follows. In Section~\ref{sec:quadrature}, we present the unidimensional quadrature method based on moments, and then we generalise the method for multidimensional systems. In Section~\ref{sec:filtering}, we derive the filter with moment representations, and then we show how to apply the introduced moment quadrature method to the moment filter. In the same section, we prove that the moment filter converges in distribution and moments. The numerical results are shown in Section~\ref{sec:experiments}, followed by conclusions and discussions in Section~\ref{sec:conclusions}. Finally, in Section~\ref{sec:related-works}, we discuss the related works for comparison to our method. 
	
	\section{Quadrature with moments}
	\label{sec:quadrature}
	Let $X\in\R^d$ be a random variable and $\PP$ be its probability distribution/measure. For clarity, we for now assume that the dimension $d=1$. The essence of the filtering problem that we aim to solve consists in computing the integral
	\begin{equation*}
		\expec{f(X)} \coloneqq \int f(x) \diff \PP(x) \approx \sum^{N}_{n=1} w_n \, f(\lambda_n),
	\end{equation*}
	by a set of quadrature rules $\lbrace w_n, \lambda_n \rbrace_{n=1}^{N}$, and the quadrature rules are to be determined by the moments $m_n \coloneqq \expec{X^n}$ of the measure $\PP$. A straightforward solution is to approximate the integrand $f$ by a power series. Then the integral is approximated by a sum of moments weighted by the derivatives of $f$. However, this approach has limited applications, as it	requires the integrand to be analytic which is a restrictive condition. Moreover, computing high-order derivatives is computationally demanding.\looseness=-1 
	
	To solve this moment-quadrature problem, we formulate a system of orthonormal polynomials, the coefficients of which are defined by the moments, and then we use the roots of the polynomial with the highest degree as the quadrature nodes~\cite{Golub1969, Golub2010}. By using this approach, the quadrature approximation is exact for any polynomial $f$ of degrees equal to or less than $2 \, N - 1$ with moments $M^N \coloneqq \lbrace m_n \rbrace_{n=0}^{2 \, N - 1}$. This in turn means that the quadrature is asymptotically exact in $N$ for any continuous $f$ on a compact domain by Weierstrass theorem. We detail this approach in the following. 
	
	Let us define an inner product $\innerp{f, g} \coloneqq \int f \, g \diff \PP = \expec{f(X) \, g(X)}$, and denote $\psi_n$ an orthonormal polynomial of degree $n$. It is well-known that any orthonormal polynomial system $\psi = \lbrace \psi_0, \psi_1, \ldots, \psi_N\rbrace$ in terms of this inner product is uniquely characterised by a three-term recurrence relation~\cite{Gautschi2004, Golub2010} 
	\begin{equation}
		\beta_{n+1} \, \psi_{n+1}(x) = (x - \alpha_{n+1}) \, \psi_n(x) - \beta_n \, \psi_{n-1}(x), \quad n=0,1,\ldots,N-1,
		\label{equ:three-term}
	\end{equation}
	where
	\begin{equation*}
		\begin{split}
			\psi_{-1}(x) &\coloneqq 0, \quad \psi_0(x) = 1, \quad \text{for all } x, \\
			\alpha_n &= \innerp{\psi_{n-1}, Z\psi_{n-1}}, \quad n=1,2,\ldots,N,\\
			\beta_0 = 1, \quad \beta_n &=\innerp{\psi_n, Z\psi_{n-1}},\quad n=1,2,\ldots, N-1, 
		\end{split}
	\end{equation*}
	and $Z$ is a multiplication operator defined via $(Zg)(x) \coloneqq x \, g(x)$ for any function $g$. If we rewrite the three-term recurrence relation in a vector form~\cite[pp. 86]{Golub2010}, the coefficients $\lbrace \alpha_n \rbrace_{n=1}^N$ and $\lbrace \beta_n \rbrace_{n=1}^{N-1}$ constitute a tridiagonal Jacobi matrix
	\begin{equation}
		J_N \coloneqq 
		\begin{bmatrix}
			\alpha_1 & \beta_1 &  &  & 0\\
			\beta_1 & \alpha_2 & \beta_2 &  &  \\
			& \beta_2 & \ddots & \ddots &  \\
			& & \ddots & \ddots & \beta_{N-1}\\
			0 & & & \beta_{N-1} & \alpha_N
		\end{bmatrix}.
		\label{equ:jacobi-matrix}
	\end{equation}
	A classical result by~\cite{Golub1969} shows that the roots $\lbrace \lambda_n \rbrace_{n=1}^N$ of $\psi_N$ are the eigenvalues of $J_N$, and that the corresponding quadrature weights $\lbrace w_n \rbrace_{n=1}^N$ are the squares of the first components of the eigenvectors of $J_N$ (i.e., if $u_n$ is the $n$-th eigenvector of $J_N$, then $w_n = u_{n,1}^2$, where $u_{n,1}$ is the first component of $u_n$). Computing the eigendecomposition of $J_N$ is notably efficient, since the Jacobi matrix is tridiagonal. 
	
	To find such an orthonormal system whose coefficients $\lbrace \alpha_n \rbrace_{n=1}^N$ and $\lbrace \beta_n \rbrace_{n=1}^{N-1}$ are determined by the moments, we define a system of linearly independent functions $\phi \coloneqq \lbrace \phi_0, \phi_1, \ldots, \phi_{N-1} \rbrace$, where $\phi_n(x) \coloneqq x^n$. Observing that
	\begin{equation*}
		\begin{split}
			G_N &\coloneqq \\
			&\begin{bmatrix}
				\innerp{\phi_0, \phi_0} & \innerp{\phi_0, \phi_1} & \cdots & \innerp{\phi_0, \phi_{N-1}} \\
				\innerp{\phi_1, \phi_0} & \innerp{\phi_1, \phi_1} & \cdots & \innerp{\phi_1, \phi_{N-1}} \\
				\vdots & \vdots & \ddots & \vdots \\
				\innerp{\phi_{N-1}, \phi_0} & \innerp{\phi_{N-1}, \phi_1} & \cdots & \innerp{\phi_{N-1}, \phi_{N-1}}
			\end{bmatrix}
			=
			\begin{bmatrix}
				m_0 & m_1 & \cdots & m_{N-1} \\
				m_1 & m_2 & \cdots & m_N \\
				\vdots & \vdots & \ddots & \vdots \\
				m_{N-1} & m_N & \cdots & m_{2 \, N - 2}
			\end{bmatrix}
		\end{split}
	\end{equation*}
	is a Gram/Hankel matrix of moments, we can then obtain the desired orthonormal system $\psi$ by a Gram--Schimidt orthonormalisation of $\phi$. The results in~\cite{Golub1969} show a straightforward routine to compute the coefficients in Equation~\eqref{equ:three-term} by the elements of the Cholesky decomposition of $G_N$. However, this approach loses two degrees of exactness, since $G_N$ does not contain the terminal moment $m_{2 N - 1}$. That is, with $2 \, N$ moments $M^N$, the method results in a Jacobi matrix of size $N-1$ which makes the quadrature approximation exact for polynomial integrands of degrees equal to or less than $2 \, N - 3$. To compensate the exactness to up to degree $2 \, N - 1$, we can see the Jacobi coefficients as a matrix representation of the multiplication operator $Z$ in $\psi$, at the cost of additional computations for solving a linear system~\cite{Sarmavuori2019}. To see this, we define another Hankel matrix of moments
	\begin{equation*}
		\begin{split}
			H_N &\coloneqq
			\begin{bmatrix}
				\innerp{\phi_0, Z\phi_0} & \innerp{\phi_0, Z\phi_1} & \cdots & \innerp{\phi_0, Z\phi_{N-1}} \\
				\innerp{\phi_1, Z\phi_0} & \innerp{\phi_1, Z\phi_1} & \cdots & \innerp{\phi_1, Z\phi_{N-1}} \\
				\vdots & \vdots & \ddots & \vdots \\
				\innerp{\phi_{N-1}, Z\phi_0} & \innerp{\phi_{N-1}, Z\phi_1} & \cdots & \innerp{\phi_{N-1}, Z\phi_{N-1}}
			\end{bmatrix}\\
			&=
			\begin{bmatrix}
				m_1 & m_2 & \cdots & m_N \\
				m_2 & m_3 & \cdots & m_{N+1} \\
				\vdots & \vdots & \ddots & \vdots \\
				m_N & m_{N+1} & \cdots & m_{2 \, N - 1}
			\end{bmatrix}
		\end{split}
	\end{equation*}
	which is the finite matrix representation of the multiplication operator $Z$ in $\phi$. Now let $L_N \, L_N^\trans = G_N$ be the Cholesky decomposition of $G_N$, we can then transform the matrix representation of the operator $Z$ in $\phi$ to $\psi$ by
	\begin{equation}
		\begin{split}
			L_N^{-1} \, H_N^{\phantom{\trans}} \, (L_N^\trans)^{-1}
			&=
			\begin{bmatrix}
				\innerp{\psi_0, Z\psi_0} & \innerp{\psi_0, Z\psi_1} & \cdots & \innerp{\psi_0, Z\psi_{N-1}} \\
				\innerp{\psi_1, Z\phi_0} & \innerp{\psi_1, Z\psi_1} & \cdots & \innerp{\psi_1, Z\psi_{N-1}} \\
				\vdots & \vdots & \ddots & \vdots \\
				\innerp{\psi_{N-1}, Z\psi_0} & \innerp{\psi_{N-1}, Z\psi_1} & \cdots & \innerp{\psi_{N-1}, Z\psi_{N-1}}
			\end{bmatrix}\\
			&= J_N
			\label{equ:jacobi-matrix-from-H}
		\end{split}
	\end{equation}
	which equals to the Jacobi matrix in Equation~\eqref{equ:jacobi-matrix} by definition. 
	
	In summary, with moments $M^N$, we first use the moments to form the matrices $G_N$ and $H_N$, and then we take the Cholesky decomposition of $G_N$ and solve the linear system as per Equation~\eqref{equ:jacobi-matrix-from-H} to compute $J_N$. With the Jacobi matrix $J_N$, we compute its eigenvalues and eigenvectors that we use to determine the quadrature rules. The computational complexity of this quadrature method is dominated by the Cholesky decomposition which is of $O(N^3)$. In Figure~\ref{fig:1d-quadrature-rules}, we exemplify the quadrature rules generated for three distributions with $N=11$.  
	
	\begin{figure}
		\centering
		\includegraphics[width=\linewidth]{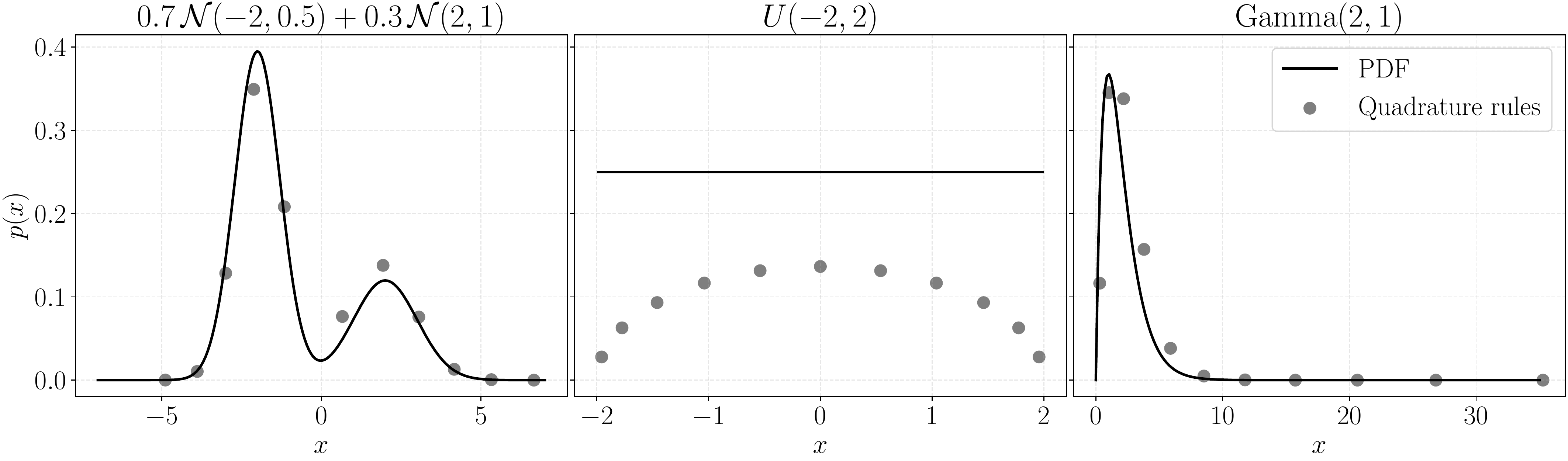}
		\caption{Quadrature rules ($N = 11$) for a Gaussian sum (left), a uniform (middle), and a Gamma (right) distribution. The horizontal and vertical locations of the grey points represent the values of the quadrature nodes and weights, respectively. We see that the quadrature rules essentially form a discrete approximation to the continuous distribution, where the discrete probability bins are chosen such that the expectation is exact for up to $2 \, N - 1$ degree polynomials. }
		\label{fig:1d-quadrature-rules}
	\end{figure}
	
	\subsection{Generalisation for multidimensional quadrature}
	\label{sec:nd-quadrature}
	Now consider that the dimension $d > 1$. The quadrature method in the previous section no longer applies, since the coefficients of the three-term recurrence relation become vector-valued~\cite[Chap. 3]{Dunkl2014}. To generalise the quadrature for multidimensional integrations, we resort to view the numerical integration as a finite matrix approximation to a multiplication operator. More specifically, for each dimension, we define a multiplication operator, and then we analogously compute the Jacobi matrix associated for that operator. The resulting quadrature rules are finally given by the Cartesian products of the eigendecompositions of these Jacobi matrices. To detail this generalisation, we introduce the following technical prerequisites.
	
	Let $X_{(i)}$ denote the $i$-th element of the vector $X\in\R^d$. To define the moments for multidimensional random variables, we introduce multi-index $\cu{n} = (n_1, n_2, \ldots, n_d)$ of fixed length $d$, and the exponent $X^\cu{n}$ reads as the product $X^\cu{n} \coloneqq X_{(1)}^{n_1} \, X_{(2)}^{n_2} \cdots X_{(d)}^{n_d}$. As an example, if $d=3$ and $\cu{n} = (5, 1, 4)$, then $X^\cu{n} = X_{(1)}^5 \, X_{(2)}^{\phantom{1}} \, X_{(3)}^4$. The collection of moments that we need to generate the quadrature rules is then
	\begin{equation*}
		\begin{split}
			M^N \coloneqq \lbrace m_\cu{n} \colon \abs{\cu{n}} \leq 2 \, N - 1 \rbrace, \quad m_\cu{n} \coloneqq \expec{X^\cu{n}},
		\end{split}
	\end{equation*}
	where $\abs{\cu{n}}$ stands for the sum of the multi-index $\cu{n}$. The collection $M^N$ has in total $\binom{2 \, N - 1 + d}{2 \, N - 1}$ elements, and the elements can be arbitrarily ordered.
	
	Similarly as in Section~\ref{sec:quadrature}, we define a system of linearly independent basis $\phi \coloneqq \lbrace \phi_\cu{n} \colon \abs{\cu{n}} \leq N - 1 \rbrace$, where $\phi_\cu{n}(x) \coloneqq x^\cu{n}$, so that they represent the moments. Furthermore, to ensure that our quadrature approximation is valid, we restrict the first basis function in $\phi$ to be $\phi_{\cu{n}_0}(x)=1$, where the multi-index $\cu{n}_0$ has sum $\abs{\cu{n}_0}=0$. The Gram matrix induced by this system is denoted by $G_S$ with element $(G_{S})_{ij} \coloneqq \innerp{\phi_{\cu{n}_i}, \phi_{\cu{n}_j}}$ for $i,j=0, 1,\ldots, S-1$, where the size of the matrix is $S=\binom{N - 1 + d}{N - 1}$. Since the positive definiteness of the Gram matrix is independent of how we order the basis functions in $\phi$, a convenient choice is the graded lexicographical order~\cite{Dunkl2014} which has a few nice properties in favour of indexing and computation.
	
	For each dimension $i=1,2,\ldots, d$, we introduce a multiplication operator $Z_i$ defined via $(Z_i g)(x) \coloneqq x_{(i)} \, g(x)$ for any function $g\colon \R^d \to\R$. The finite matrix representation of the operator $Z_i$ in $\phi$ is defined by
	\begin{equation*}
		H_{S, i} 
		\coloneqq
		\begin{bmatrix}
			\innerp{\phi_{\cu{n}_0}, Z_i\phi_{\cu{n}_0}} & \innerp{\phi_{\cu{n}_0}, Z_i\phi_{\cu{n}_1}} & \cdots & \innerp{\phi_{\cu{n}_0}, Z_i\phi_{\cu{n}_{S-1}}} \\
			\innerp{\phi_{\cu{n}_1}, Z_i\phi_{\cu{n}_0}} & \innerp{\phi_{\cu{n}_1}, Z_i\phi_{\cu{n}_1}} & \cdots & \innerp{\phi_{\cu{n}_1}, Z_i\phi_{\cu{n}_{S-1}}} \\
			\vdots & \vdots & \ddots & \vdots \\
			\innerp{\phi_{\cu{n}_{S-1}}, Z_i\phi_{\cu{n}_0}} & \innerp{\phi_{\cu{n}_{S-1}}, Z_i\phi_{\cu{n}_1}} & \cdots & \innerp{\phi_{\cu{n}_{S-1}}, Z_i\phi_{\cu{n}_{S-1}}}
		\end{bmatrix}.
	\end{equation*}
	
	\begin{example}
		\label{example:nd-gram-hankel}
		Consider $d=2$ and $N=2$. Let us choose graded lexicographical ordered multi-indices $\cu{n}_0 = (0, 0), \cu{n}_1 = (0, 1), \cu{n}_2 = (1, 0), \ldots, \cu{n}_9 = (3, 0)$, so $\phi_{\cu{n}_0}(x)=1$, $\phi_{\cu{n}_1}(x)=x_{(2)}$, and $\phi_{\cu{n}_2}(x)=x_{(1)}$. Suppose that the ten moments in $M^N$ are ordered by these multi-indices as well. The Gram matrix and the matrices of the two multiplication operators are then
		\begin{equation*}
			\begin{split}
				G_3 &= 
				\begin{bmatrix}
					1 & \expec{X_{(2)}} & \expec{X_{(1)}} \\
					\expec{X_{(2)}} & \expec{X_{(2)}^2} & \expec{X_{(1)} \, X_{(2)}}\\
					\expec{X_{(1)}} & \expec{X_{(1)} \, X_{(2)}} & \expec{X_{(1)}^2}
				\end{bmatrix}
				= 
				\begin{bmatrix}
					m_{\cu{n}_0} & m_{\cu{n}_1} & m_{\cu{n}_2} \\
					m_{\cu{n}_1} & m_{\cu{n}_3} & m_{\cu{n}_4} \\
					m_{\cu{n}_2} & m_{\cu{n}_4} & m_{\cu{n}_5}
				\end{bmatrix},\\
				H_{3, 1} &= 
				\begin{bmatrix}
					\expec{X_{(1)}} & \expec{X_{(1)} \, X_{(2)}} & \expec{X_{(1)}^2} \\
					\expec{X_{(1)} \, X_{(2)}} & \expec{X_{(1)} \, X_{(2)}^2} & \expec{X_{(1)}^2 \, X_{(2)}} \\
					\expec{X_{(1)}^2} & \expec{X_{(1)}^2 \, X_{(2)}} & \expec{X_{(1)}^3}
				\end{bmatrix}
				=
				\begin{bmatrix}
					m_{\cu{n}_2} & m_{\cu{n}_4} & m_{\cu{n}_5} \\
					m_{\cu{n}_4} & m_{\cu{n}_7} & m_{\cu{n}_8} \\
					m_{\cu{n}_5} & m_{\cu{n}_8} & m_{\cu{n}_9}
				\end{bmatrix},\\
				H_{3, 2} &= 
				\begin{bmatrix}
					\expec{X_{(2)}} & \expec{X_{(2)}^2} & \expec{X_{(1)} \, X_{(2)}} \\
					\expec{X_{(2)}^2} & \expec{X_{(2)}^3} & \expec{X_{(1)} \, X_{(2)}^2} \\
					\expec{X_{(1)} \, X_{(2)}} & \expec{X_{(1)} \, X_{(2)}^2} & \expec{X_{(1)}^2 \, X_{(2)}}
				\end{bmatrix}
				=
				\begin{bmatrix}
					m_{\cu{n}_1} & m_{\cu{n}_3} & m_{\cu{n}_4} \\
					m_{\cu{n}_3} & m_{\cu{n}_6} & m_{\cu{n}_7} \\
					m_{\cu{n}_4} & m_{\cu{n}_7} & m_{\cu{n}_8}
				\end{bmatrix}.
			\end{split}
		\end{equation*}
	\end{example}
	
	Let $L_S \, L_S^\trans = G_S$ be the Cholesky decomposition of the Gram matrix $G_S$, then we can compute the matrix representation $\vscriptalign{\mathring{H}}_{S, i}$ of $Z_i$ in an orthonormal basis system $\psi \coloneqq \lbrace \psi_{\cu{n}_0}, \psi_{\cu{n}_1}, \ldots, \psi_{\cu{n}_{S-1}} \rbrace$ by
	\begin{equation}
		\vscriptalign{\mathring{H}}_{S, i} = L_S^{-1} \, H_{S, i}^{\phantom{\trans}} \, \bigl(L_S^\trans\bigr)^{-1},
		\label{equ:nd-hankel-orth}
	\end{equation}
	where its $u,v$-th matrix element $(\vscriptalign{\mathring{H}}_{S, i})_{uv} = \innerp{\psi_{\cu{n}_u}, Z_i\psi_{\cu{n}_v}}$, and $\psi_{\cu{n}_0}(x) = 1$.
	
	Let $f\colon \R^d \to\R$ be any continuous function on a compact domain, and $A$ be any self-adjoint operator. We define the functional operator $f(A)$ as a spectral integral $f(A) \coloneqq \int_{\sigma(A)} f(z) \diff P_A(z)$, where $P_A$ is a projection-valued measure induced by $A$, and $\sigma(A)$ is the spectrum of $A$. Hence, by $f(Z_1, Z_2, \ldots, Z_d)$ we mean that it is an operator $f(Z_1, Z_2, \ldots, Z_d) = \int f(z_1, z_2, \ldots, z_d) \diff P_{Z_1}(z_1)\diff P_{Z_2}(z_2)\cdots \diff P_{Z_d}(z_d)$. For details of these definitions, see, for instance, \cite{Simon2015}. It turns out that $f(Z_1, Z_2, \ldots, Z_d)$ is a multiplication operator as well, that is, for any function $g \colon \R^d \to \R$, 
	\begin{equation*}
		\bigl( f(Z_1, Z_2, \ldots, Z_d)g \bigr)(x) = f(x) \, g(x).
	\end{equation*}
	With the property above in mind, we can now think of the quadrature as a finite-matrix approximation to the operator $f(Z_1, Z_2, \ldots, Z_d)$. Specifically, the catch is to represent the integral as
	\begin{equation}
		\begin{split}
			\int f(x) \diff \PP(x) = \innerp{\psi_{\cu{n}_0}, f \, \psi_{\cu{n}_0}} &= \innerp{\psi_{\cu{n}_0}, f(Z_1, Z_2, \ldots, Z_d) \, \psi_{\cu{n}_0}} \\
			&\approx e_0^\trans \, f(\vscriptalign{\mathring{H}}_{S, 1}, \vscriptalign{\mathring{H}}_{S, 2}, \ldots, \vscriptalign{\mathring{H}}_{S, d}) \, e_0,
			\label{equ:integral-as-matrix}
		\end{split}
	\end{equation}
	where $f(\vscriptalign{\mathring{H}}_{S, 1}, \vscriptalign{\mathring{H}}_{S, 2}, \ldots, \vscriptalign{\mathring{H}}_{S, d}) \in \R^{S \times S}$ is a matrix that approximately represents the operator $f(Z_1, Z_2, \ldots, Z_d)$, and $e_0 = \begin{bmatrix} 1 & 0 & \cdots & 0 \end{bmatrix}^\trans$ extracts the first component of the matrix. Since we have defined $f(Z_1, Z_2, \ldots, Z_d)$ as a spectral integral, the definition of the matrix $f(\vscriptalign{\mathring{H}}_{S, 1}, \vscriptalign{\mathring{H}}_{S, 2}, \ldots, \vscriptalign{\mathring{H}}_{S, d})$ is
	\begin{equation}
		\begin{split}
			&f(\vscriptalign{\mathring{H}}_{S, 1}, \vscriptalign{\mathring{H}}_{S, 2}, \ldots, \vscriptalign{\mathring{H}}_{S, d}) \\
			&\coloneqq \sum_{n_1=1}^S \sum_{n_2=1}^S \cdots \sum_{n_d=1}^S f(\lambda_{n_1}, \lambda_{n_2}, \ldots, \lambda_{n_d}) \, u_{n_1}^{\phantom{\trans}} \, u_{n_1}^\trans \, u_{n_2}^{\phantom{\trans}} \, u_{n_2}^\trans \, \cdots \, u_{n_d}^{\phantom{\trans}} \, u_{n_d}^\trans,
			\label{equ:f-Ho}
		\end{split}
	\end{equation}
	where $\lambda_{n_i}$ and $u_{n_i}$ are the $n_i$-th eigenvalue and eigenvector of the matrix $\vscriptalign{\mathring{H}}_{S, i}$, respectively. Now by substituting Equation~\eqref{equ:f-Ho} back into Equation~\eqref{equ:integral-as-matrix}, we see that the quadrature nodes are all the combinations of the eigenvalues, and the corresponding weights are products of inner products. More precisely, the quadrature is
	\begin{equation}
		\begin{split}
			&\int f(x) \diff \PP(x) \\
			&\approx \sum_{n_1=1}^S \sum_{n_2=1}^S \cdots \sum_{n_d=1}^S f(\lambda_{n_1}, \lambda_{n_2}, \ldots, \lambda_{n_d}) \, e_0^\trans \, u_{n_1}^{\phantom{\trans}} \, u_{n_1}^\trans \, u_{n_2}^{\phantom{\trans}} \, u_{n_2}^\trans \, \cdots \, u_{n_d}^{\phantom{\trans}} \, u_{n_d}^\trans \, e_0\\
			&\coloneqq \sum_{\cu{n} \in \mathfrak{n}_{N, d}} w_{\cu{n}} \, f(\lambda_{\cu{n}}), 
			\label{equ:nd-quadratrure-rules-compact}
		\end{split}
	\end{equation}
	where in the last line we compactly write the quadrature rules as
	\begin{equation}
		\begin{split}
			\lambda_\cu{n} &\coloneqq 
			\begin{bmatrix}
				\lambda_{n_1} & \lambda_{n_2} & \cdots & \lambda_{n_d}
			\end{bmatrix}^\trans, \\
			w_\cu{n} &\coloneqq \innerp{e_0, u_{n_1}}_S \, \biggl( \prod_{i=1}^{d-1} \innerp{u_{n_i}, u_{n_{i+1}}}_S \biggl)	\innerp{u_{n_d}, e_0}_S, \quad \innerp{x, y}_S \coloneqq x^\trans \, y, \\
			\mathfrak{n}_{N, d} &\coloneqq \bigl\lbrace (1, 2, \ldots, S) \times \overset{d}{\cdots} \times (1, 2, \ldots, S) \bigr\rbrace.
			\label{equ:nd-quadratrure-rules}
		\end{split}
	\end{equation}
	If we let $d=1$, it is clear that this generalised quadrature reduces to the unidimensional quadrature in Section~\ref{sec:quadrature}. Furthermore, we show that this quadrature method is also exact for multivariate polynomials of degree equal to or less than $2 \, N - 1$. This is given in the following lemma.

	\begin{lemma}
		\label{lemma:poly-exactness}
		The quadrature in Equation~\eqref{equ:nd-quadratrure-rules-compact} is exact for every polynomial $f$ of degree equal to or less than $2 \, N - 1$.
	\end{lemma}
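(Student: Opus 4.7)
The plan is to reduce the problem to monomials $f(x) = x^{\cu{n}}$ with $|\cu{n}| \leq 2 \, N - 1$ by linearity, and then to establish the matrix identity $e_0^\trans \, \mathring{H}_{S,1}^{n_1} \cdots \mathring{H}_{S,d}^{n_d} \, e_0 = m_{\cu{n}}$. First I would note that, for any such monomial, the spectral definition~\eqref{equ:f-Ho} together with the eigendecomposition of each $\mathring{H}_{S,i}$ makes the multiple sum collapse into a product of matrix powers,
\begin{equation*}
f(\mathring{H}_{S,1}, \ldots, \mathring{H}_{S,d}) = \mathring{H}_{S,1}^{n_1} \, \mathring{H}_{S,2}^{n_2} \cdots \mathring{H}_{S,d}^{n_d},
\end{equation*}
so that the quadrature side of Equation~\eqref{equ:integral-as-matrix} reads exactly as the bilinear form above.

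The structural observation I would then exploit is that, in view of~\eqref{equ:nd-hankel-orth}, each $\mathring{H}_{S,i}$ is real symmetric and represents the operator $\pi_S \, Z_i \, \pi_S$ in the orthonormal basis $\psi$, where $\pi_S$ denotes the $L^2(\PP)$-orthogonal projection onto $\mathrm{span}(\psi)$ (the polynomials of total degree at most $N-1$), and $e_0$ is the coordinate vector of $\psi_{\cu{n}_0} = 1$. A short induction would yield the key lemma: whenever a chain $\mathring{H}_{S,i_1}^{\gamma_1} \cdots \mathring{H}_{S,i_\ell}^{\gamma_\ell} \, e_0$ produces only intermediate polynomials of degree $\leq N-1$, the result equals the coordinate vector of $x_{i_1}^{\gamma_1} \cdots x_{i_\ell}^{\gamma_\ell}$, because each internal $\pi_S$ then acts as the identity. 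This already handles the case $|\cu{n}| \leq N-1$.

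For the regime $|\cu{n}| \geq N$, I would split the product using the self-adjointness of each $\mathring{H}_{S,i}$ as
\begin{equation*}
e_0^\trans \mathring{H}_{S,1}^{n_1} \cdots \mathring{H}_{S,d}^{n_d} e_0 = \bigl(\mathring{H}_{S,k}^{a} \mathring{H}_{S,k-1}^{n_{k-1}} \cdots \mathring{H}_{S,1}^{n_1} e_0\bigr)^\trans \bigl(\mathring{H}_{S,k}^{n_k - a} \mathring{H}_{S,k+1}^{n_{k+1}} \cdots \mathring{H}_{S,d}^{n_d} e_0\bigr),
\end{equation*}
which by orthonormality of $\psi$ becomes the $L^2(\PP)$-inner product of the two corresponding polynomials. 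Writing $s_j := n_1 + \cdots + n_j$, I would take $k$ to be the first index with $s_k \geq N$ and $a := N - 1 - s_{k-1} \in \lbrace 0, 1, \ldots, n_k \rbrace$, so that the left-hand chain reaches total exponent $N-1$ while every partial exponent stays $\leq N-1$; by the key lemma it is exactly the coordinate vector of $x_1^{n_1} \cdots x_{k-1}^{n_{k-1}} x_k^{a}$. For $|\cu{n}| \leq 2 \, N - 2$ the right-hand chain likewise stays inside $\mathrm{span}(\psi)$, and the inner product trivially evaluates to $\int x^{\cu{n}} \diff \PP = m_{\cu{n}}$.

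The hard part will be the borderline case $|\cu{n}| = 2 \, N - 1$, where the right-hand chain must have total exponent $N$. A careful accounting shows that, with the above choice of $(k, a)$, the right-hand partial exponents remain $\leq N - 1$ until only the final multiplication pushes the degree from $N - 1$ to $N$, so exactly one nontrivial $\pi_S$ is activated and the right factor equals the coordinate vector of $\pi_S\bigl(x_k^{n_k - a} x_{k+1}^{n_{k+1}} \cdots x_d^{n_d}\bigr)$. Since the left factor already lies in $\mathrm{span}(\psi)$, the self-adjointness of $\pi_S$ lets me absorb this lingering projection,
\begin{equation*}
\innerp{x_1^{n_1} \cdots x_k^{a}, \, \pi_S\bigl(x_k^{n_k - a} \cdots x_d^{n_d}\bigr)} = \innerp{x_1^{n_1} \cdots x_k^{a}, \, x_k^{n_k - a} \cdots x_d^{n_d}} = m_{\cu{n}},
\end{equation*}
which completes the argument.
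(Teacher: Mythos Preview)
Your proof is correct and follows essentially the same strategy as the paper: split the monomial exponent $\cu{n}$ across an index $k$ into two parts of total degree at most $N-1$ and $N$ respectively (preserving the left-to-right order of the $\mathring{H}_{S,i}$), use that each factor then equals the coordinate vector of the corresponding monomial (possibly after a single projection $\pi_S$ in the degree-$N$ case), and pair them via the orthonormality of $\psi$. The paper phrases the same mechanism through Parseval's identity and an explicit expansion $x^{\cu{v}-\cu{e}_i}=\sum_{|\cu{p}|\leq N-1} c_{\cu{p}}\,\psi_{\cu{p}}$ rather than your projection-operator language, but the content is identical.
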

	\begin{proof}
	To show the exactness for polynomials of degree equal to or less than $2 \, N - 1$, it is enough to prove $\innerp{\psi_{\cu{n}_0}, Z^\cu{n}\psi_{\cu{n}_0}} = e_0^\trans \, \mathring{H}_S^\cu{n} \, e_0$ for all $\abs{\cu{n}}\leq 2 \, N - 1$, where $Z^\cu{n} \coloneqq \prod_{i=1}^d Z_i^{n_i}$ and $\mathring{H}_S^\cu{n} \coloneqq \prod_{i=1}^d \mathring{H}_{S, i}^{n_i}$. Evidently, this holds for $\abs{\cu{n}} = 0$. To prove this for $0< \abs{\cu{n}}\leq 2 \, N - 1$, we decompose $\cu{n} = \cu{u} + \cu{v}$ in the way that $0\leq\abs{\cu{u}}\leq N-1$ and $1\leq\abs{\cu{v}}\leq N$. By Parseval's identity we have
	\begin{equation}
		\begin{split}
			\innerp{\psi_{\cu{n}_0}, Z^\cu{n}\psi_{\cu{n}_0}} &= \sum_{\abs{\cu{q}}\geq 0} \innerp{\psi_{\cu{q}}, Z^\cu{u} \psi_{\cu{n}_0}} \, \innerp{\psi_{\cu{q}}, Z^\cu{v} \psi_{\cu{n}_0}}\\
			&= \sum_{\abs{\cu{q}}\leq N-1} \innerp{\psi_{\cu{q}}, Z^\cu{u} \psi_{\cu{n}_0}} \, \innerp{\psi_{\cu{q}}, Z^\cu{v} \psi_{\cu{n}_0}},
		\end{split}
		\label{equ:truncated-parseval}
	\end{equation}
	where we truncate the sum at $N-1$ because of the orthonormality (i.e., $\innerp{\psi_{\cu{q}},Z^{\cu{u}}\psi_{\cu{n}_0}}=0$ for $\abs{\cu{q}}>N-1$). Next we look at $\innerp{\psi_{\cu{q}},Z^{\cu{v}}\psi_{\cu{n}_0}}$ in Equation~\eqref{equ:truncated-parseval} for $\abs{\cu{q}}\leq N -1$ and $1\leq\abs{\cu{v}}\leq N$. We refer to the index of the first non-zero element of $\cu{v}$ as $i$, that is, $v_j=0$ for $j<i$, and $v_j\geq 0$ otherwise. Because $\abs{\cu{v}-\cu{e}_i}\leq N -1$, we can express the monomial $x\mapsto x^{\cu{v}-\cu{e}_i}$ as a linear combination of orthonormal polynomials up to order $N-1$, that is, $x^{\cu{v}-\cu{e}_i}=\sum_{\abs{\cu{p}}\leq N-1}c_{\cu{p}}\,\psi_{\cu{p}}(x)$ for some coefficients $c$. For finite matrix equivalent of the monomial of multiplication operators this means that
	\begin{equation}
		\begin{split}
			\innerp{\psi_{\cu{q}},Z^{\cu{v}}\psi_{\cu{n}_0}}
			&= \innerp{Z_i\,\psi_{\cu{q}},Z^{\cu{v}-\cu{e}_i}\psi_{\cu{n}_0}} \\
			&= \sum_{\abs{\cu{p}}\leq N - 1}c_{\cu{p}}\, \innerp{Z_i\psi_{\cu{q}}, \psi_{\cu{p}}}
			= \sum_{\abs{\cu{p}}\leq N-1} c_{\cu{p}}\,
			e_{j_{\cu{q}}}^\trans \, \mathring{H}_{S,i} \, e_{j_{\cu{p}}} \\
			&= \sum_{\abs{\cu{p}}\leq N-1} c_{\cu{p}}\,
			e_{j_{\cu{q}}}^\trans \,\mathring{H}_{S,i}\,\psi_{\cu{p}}(\mathring{H}_S) \, e_0 = e_{j_{\cu{q}}}^\trans\,\mathring{H}_{S,i}\,\mathring{H}_S^{\cu{v}-e_i}\,e_0
			=e_{j_{\cu{q}}}^\trans\,\mathring{H}_S^{\cu{v}}\,e_0,
		\end{split}
		\label{equ:one-more-order-matrix-elements}
	\end{equation}
	where $j_{\cu{q}}$ and $j_{\cu{p}}$ depend on the ordering of the orthonormal basis functions so that $\psi_{\cu{q}}$ and $\psi_{\cu{p}}$ are the $j_{\cu{q}}$-th and $j_{\cu{p}}$-th basis function, respectively, both ranging from 0 to $S-1$. 
	
	In order to preserve the ordering of the matrix computations in Equation~\eqref{equ:f-Ho}, we further refine the decomposition to $\cu{u}$ and $\cu{v}$ so that there is index $l$ such that $u_i=0$ for $i>l$ and $\cu{v}_j=0$ for $j<l$. Now by Equation~\eqref{equ:one-more-order-matrix-elements}, we can write Equation~\eqref{equ:truncated-parseval} in terms of the finite matrices
        as
        \begin{equation*}
          \innerp{\psi_{\cu{n}_0}, Z^\cu{n}\psi_{\cu{n}_0}}
          =
          \sum_{\abs{\cu{q}}\leq N-1} \innerp{\psi_{\cu{q}}, Z^\cu{u} \psi_{\cu{n}_0}} \, \innerp{\psi_{\cu{q}}, Z^\cu{v} \psi_{\cu{n}_0}}
          = 
          \sum_{i=0}^{S-1} e_0^\trans\,\mathring{H}_S^{\cu{u}}\,e_i\,
          e_i^\trans\,\mathring{H}_S^{\cu{v}}\,e_0 =
          e_0^\trans\,\mathring{H}_S^{\cu{n}}\,e_0,
        \end{equation*}
    for all $0<\abs{\cu{n}}\leq 2\,N-1$.
    \end{proof}

	Based on the exactness for polynomials, we can then conclude the convergence of the quadrature method for continuous functions in the following proposition.
	
	\begin{proposition}
		\label{prop:quarature-convergence}
		Let the compact support of $\PP$ be a subset of a compact hypercube $D \subset \R^d$, and let $f\colon D\to\R$ be any continuous function. Recall that $S=\binom{N - 1 + d}{N - 1}$. Then, 
		\begin{equation}
			\lim_{N\to\infty}\absbigg{\int_D f \diff \PP - \sum_{\cu{n} \in \mathfrak{n}_{N, d}} w_{\cu{n}} \, f(\lambda_{\cu{n}})} = 0.
			\label{equ:nd-quadrature-convergence}
		\end{equation}
	\end{proposition}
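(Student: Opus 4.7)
The proof would follow the classical density pattern for convergent quadratures: exactness on polynomials (from Lemma~\ref{lemma:poly-exactness}) combined with Stone--Weierstrass density of polynomials in $C(D)$. The plan proceeds in three stages.

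First, given $\epsilon > 0$, the Stone--Weierstrass theorem on the compact hypercube $D$ produces a polynomial $p$ of some finite degree $k$ with $\norm{f - p}_{\infty, D} < \epsilon$. Choosing $N$ large enough that $2\, N - 1 \geq k$, Lemma~\ref{lemma:poly-exactness} yields the exact identity $\sum_{\cu{n}\in\mathfrak{n}_{N,d}} w_\cu{n}\, p(\lambda_\cu{n}) = \int_D p \diff \PP$. Splitting the error using this identity,
\begin{equation*}
\int_D f \diff \PP - \sum_{\cu{n}\in\mathfrak{n}_{N,d}} w_\cu{n}\, f(\lambda_\cu{n}) = \int_D (f - p) \diff \PP + \sum_{\cu{n}\in\mathfrak{n}_{N,d}} w_\cu{n}\, \bigl( p(\lambda_\cu{n}) - f(\lambda_\cu{n}) \bigr),
\end{equation*}
and the first summand is bounded by $\epsilon$ since $\PP$ is a probability measure on $D$.

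Second, to bound the pointwise factor in the second summand I would verify that every quadrature node lies in $D$. Each matrix $\vscriptalign{\mathring{H}}_{S, i}$ is the $\psi$-basis representation of the bounded symmetric multiplication operator $Z_i$ restricted to polynomials of degree $\leq N - 1$; by the Rayleigh-quotient characterisation together with $\innerp{\psi, Z_i\,\psi} = \int x_{(i)}\, \abs{\psi(x)}^2 \diff \PP(x)$, each eigenvalue $\lambda_{n_i}$ satisfies $\abs{\lambda_{n_i}} \leq \max_{x \in \mathrm{supp}\,\PP} \abs{x_{(i)}} \leq \max_{x \in D} \abs{x_{(i)}}$. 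Hence $\lambda_\cu{n} \in D$ for every $\cu{n}$, and consequently $\abs{p(\lambda_\cu{n}) - f(\lambda_\cu{n})} < \epsilon$ pointwise.

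The main obstacle is the third stage: uniformly bounding $\sum_{\cu{n}\in\mathfrak{n}_{N,d}} \abs{w_\cu{n}}$ over $N$, because in dimension $d \geq 2$ the product of inner products in Equation~\eqref{equ:nd-quadratrure-rules} may be of indefinite sign, so the trivial identity $\sum w_\cu{n} = 1$ does not suffice. My strategy would be to exploit the operator identity $\sum_\cu{n} w_\cu{n}\, g(\lambda_\cu{n}) = e_0^\trans\, g(\vscriptalign{\mathring{H}}_{S, 1}, \ldots, \vscriptalign{\mathring{H}}_{S, d})\, e_0$ from Equations~\eqref{equ:integral-as-matrix}--\eqref{equ:f-Ho} with $g = p - f$, and bound the scalar by $\norm{g(\vscriptalign{\mathring{H}}_{S, 1}, \ldots, \vscriptalign{\mathring{H}}_{S, d})}$ in matrix operator norm. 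For a single symmetric matrix, spectral calculus gives $\norm{g(A)} \leq \norm{g}_{\infty, \sigma(A)}$. The extension to the product form of Equation~\eqref{equ:f-Ho} would proceed by first handling tensor-product polynomials $g(x) = g_1(x_{(1)}) \cdots g_d(x_{(d)})$, for which the product operator factorises as $g_1(\vscriptalign{\mathring{H}}_{S, 1}) \cdots g_d(\vscriptalign{\mathring{H}}_{S, d})$ with operator norm bounded submultiplicatively by $\prod_i \norm{g_i}_\infty$, uniformly in $N$ thanks to the spectral bound from the second stage, and then invoking the density of the algebra of such tensor products in $C(D)$ by Stone--Weierstrass. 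Combining the three bounds and sending $\epsilon \to 0$ yields Equation~\eqref{equ:nd-quadrature-convergence}.
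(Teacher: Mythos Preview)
Your first two stages match the paper's argument almost exactly: the paper also splits $f=\rho_\epsilon+\tilde f_\epsilon$ via Stone--Weierstrass, invokes Lemma~\ref{lemma:poly-exactness} for the polynomial part, and uses that the nodes lie in $D$ (the paper cites this fact, while your Rayleigh-quotient argument for the eigenvalues of $\vscriptalign{\mathring H}_{S,i}$ is a clean self-contained justification).

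The gap is in your third stage. The bound you obtain for a \emph{single} tensor product,
\[
\bigl\|g_1(\vscriptalign{\mathring H}_{S,1})\cdots g_d(\vscriptalign{\mathring H}_{S,d})\bigr\|\le\prod_i\|g_i\|_\infty,
\]
does not extend to general $g\in C(D)$ by density. A Stone--Weierstrass approximation writes $g\approx\sum_j g_{1,j}\otimes\cdots\otimes g_{d,j}$, and the only bound your argument yields for the sum is $\sum_j\prod_i\|g_{i,j}\|_\infty$, the projective tensor norm, which is not controlled by $\|g\|_\infty$. Equivalently, to pass to the limit you would need the map $g\mapsto g(\vscriptalign{\mathring H}_{S,1},\ldots,\vscriptalign{\mathring H}_{S,d})$ to be continuous in sup-norm with a constant independent of $N$, but that is precisely the statement you are trying to prove, so the argument is circular. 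Since the $\vscriptalign{\mathring H}_{S,i}$ do not commute in general, there is no joint spectral calculus to fall back on, and indeed $\|g(\vscriptalign{\mathring H}_{S,1},\ldots,\vscriptalign{\mathring H}_{S,d})\|_{\mathrm{op}}\le\|g\|_\infty$ can fail.

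The paper sidesteps this by not bounding the full operator norm at all. It works instead with the specific vector $\tilde f_\epsilon(\vscriptalign{\mathring H}_{S,1},\ldots,\vscriptalign{\mathring H}_{S,d})\,e_0$ and argues directly that its Euclidean norm is at most $\sup_D|\tilde f_\epsilon|$, using that each family $\{u_{n_i}u_{n_i}^\trans\}_{n_i}$ consists of rank-one orthogonal projections summing to the identity; a single Cauchy--Schwarz against $e_0$ then gives $|I_N(\tilde f_\epsilon)|\le\epsilon$. Bounding the action on $e_0$ rather than the operator norm is the missing idea in your plan.
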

	\begin{proof}
		Since the measure $\PP$ has support on $D$, the quadrature nodes $\lbrace\lambda_\cu{n}\rbrace_{\cu{n}\in\mathfrak{n}_{N, d}}$ lie within $D$~\cite[Thm.~1]{Sarmavuori2019}. By Stone--Weierstrass theorem, for any $\epsilon>0$ there exists a polynomial $\rho_\epsilon$ on $D$ such that $\sup_{x\in D} \abs{f(x) - \rho_\epsilon(x)} < \epsilon$. Denote the residual $\tilde{f}_\epsilon \coloneqq f - \rho_\epsilon$, and recall that $\lbrace u_{n_i}^{\phantom{\trans}} \, u_{n_i}^\trans \rbrace_{i=1}^d$ are orthonormal projections. It follows that the Euclidean norm
		\begin{equation*}
			\begin{split}
				&\normBigg{\sum_{n_1=1}^S \sum_{n_2=1}^S \cdots \sum_{n_d=1}^S \tilde{f}(\lambda_{n_1}, \lambda_{n_2}, \ldots, \lambda_{n_d}) \, u_{n_1}^{\phantom{\trans}} \, u_{n_1}^\trans \, u_{n_2}^{\phantom{\trans}} \, u_{n_2}^\trans \, \cdots \, u_{n_d}^{\phantom{\trans}} \, u_{n_d}^\trans \, e_0}_2 \\
				&\leq \sup_{x \in D} \absbig{\tilde{f}(x)} < \epsilon.
			\end{split}
		\end{equation*}
		By Cauchy--Schwarz, the numerical quadrature of the residual $\tilde{f}$ is bounded:
		\begin{equation*}
			\begin{split}
				\absbig{I_N(\tilde{f})} &\coloneqq \absBigg{ e_0^\trans \sum_{n_1=1}^S \sum_{n_2=1}^S \cdots \sum_{n_d=1}^S \tilde{f}(\lambda_{n_1}, \lambda_{n_2}, \ldots, \lambda_{n_d}) \, u_{n_1}^{\phantom{\trans}} \, u_{n_1}^\trans \, u_{n_2}^{\phantom{\trans}} \, u_{n_2}^\trans \, \cdots \, u_{n_d}^{\phantom{\trans}} \, u_{n_d}^\trans \, e_0} \\
				&\leq \epsilon.
			\end{split}
		\end{equation*}
		The exact integration of $\tilde{f}$ is also bounded $I(\tilde{f}) \coloneqq \int_D \tilde{f} \diff \PP \leq \epsilon \expec{\cu{1}_D}$. Recall that for any polynomial $\rho_\epsilon$, we can always find an $N_\epsilon$ such that for every $N > N_\epsilon$, the numerical quadrature $I_N(\rho_\epsilon) = I(\rho_\epsilon)$ is exact. Therefore, we have
		\begin{equation*}
			\begin{split}
				\abs{I(f) - I_N(f)} &= \absbig{I(\tilde{f}) + I(\rho_\epsilon) - I_N(\tilde{f}) - I_N(\rho_\epsilon)} \\
				&\leq \absbig{I(\tilde{f})} + \absbig{I_N(\tilde{f})} + \abs{I(\rho_\epsilon) - I_N(\rho_\epsilon)} \leq (1 + \expec{\cu{1}_D}) \, \epsilon.\\
			\end{split}
		\end{equation*}
		This concludes the limit in Equation~\eqref{equ:nd-quadrature-convergence}.
	\end{proof}
	
	As a summary, with given moments $M^N = \lbrace m_\cu{n} \colon \abs{\cu{n}} \leq 2 \, N - 1 \rbrace$, we compute the quadrature rules as follows. We first select a partial order of the moments and the basis, and then we rearrange these moments into the Gram matrix $G_S$ and the Hankel matrices $H_{S, 1}, H_{S, 2}, \ldots, H_{S, d}$ (see, e.g., Example~\ref{example:nd-gram-hankel}). Then, we orthonormalise these Hankel matrices as per Equation~\eqref{equ:nd-hankel-orth} to obtain $\vscriptalign{\mathring{H}}_{S, 1}, \vscriptalign{\mathring{H}}_{S, 2}, \ldots, \vscriptalign{\mathring{H}}_{S, d}$. Finally, we compute the eigendecompositions of the orthonormalised Hankel matrices, and then combine the eigenvalues and eigenvectors as in Equation~\eqref{equ:nd-quadratrure-rules} for which we produce the quadrature rules. A pseudo-code of this quadrature method is given in the following algorithm. \looseness=-1

	\begin{algorithm2e}[h]
		\SetAlgoLined
		\DontPrintSemicolon
		\Func{\FnMQuad{$M^N$}}{%
			Build the Gram matrix $G_S$ and Hankel matrices $H_{S_1}, \ldots, H_{S_d}$ based on the moments in $M^N$ \;
			Cholesky decomposition $L_S \, L_S^\trans = G_S$\;
			\For(\tcp*[f]{In parallel}){$i=1$ \KwTo $d$}{%
				$\vscriptalign{\mathring{H}}_{S, i} = L_S^{-1} \, H_{S, i}^{\phantom{\trans}} \, (L_S^\trans)^{-1}$\;
				Compute eigenvalues and eigenvectors $\lbrace \lambda_{n_i}, u_{n_i} \rbrace_{n_i=1}^S$ of $\vscriptalign{\mathring{H}}_{S, i}$\;
			}
			\For(\tcp*[f]{In parallel}){$\cu{n}$ \KwForIn $\mathfrak{n}_{N, d}$}{%
				$\lambda_\cu{n} \coloneqq 
				\begin{bmatrix}
					\lambda_{n_1} & \lambda_{n_2} & \cdots & \lambda_{n_d}
				\end{bmatrix}^\trans$ \;
				$w_\cu{n} \coloneqq \innerp{e_0, u_{n_1}}_S \, \Bigl( \prod_{i=1}^{d-1} \innerp{u_{n_i}, u_{n_{i+1}}}_S \Bigl)	\innerp{u_{n_d}, e_0}_S$ \;
			}
			\KwRet{$\lbrace w_\cu{n}, \lambda_\cu{n} \rbrace_{n \in \mathfrak{n}_{S, d}}$}
		}
		\caption{$d$-dimensional moment quadrature with order $N$}
		\label{alg:nd-moment-quadrature}
	\end{algorithm2e}

	\begin{figure}[t!]
		\centering
		\includegraphics[width=\linewidth]{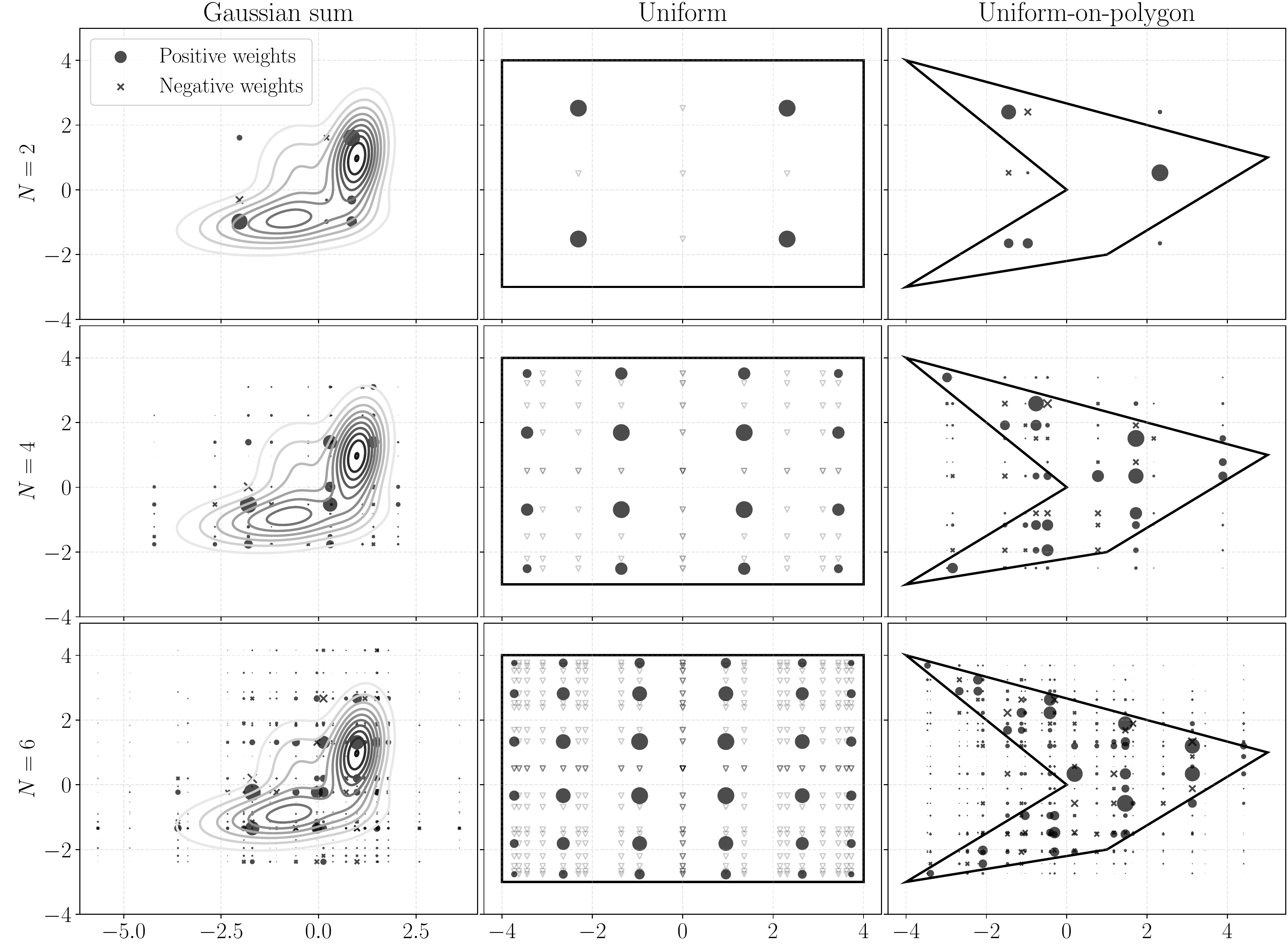}
		\caption{Quadrature rules for a Gaussian sum (left), a uniform (middle), and a uniform distribution on a polygon area (right). The location and size of the grey points represent the quadrature nodes and weights, respectively. For each plot, the sizes of the points are normalised by their maximum weight. In the middle, we mark the zero weights for the uniform distribution by triangles.}
		\label{fig:nd-quadrature-rules}
	\end{figure}

	Figure~\ref{fig:nd-quadrature-rules} exemplifies three two-dimensional distributions and their corresponding quadrature rules with different order $N$. We see that the quadrature nodes are always confined within a rectangle area due to the Cartesian product construction, and that the weights are noticeably larger in the high-density areas of the distributions than the low-density areas. This suggests that the quadrature rules generated from the moments can represent these distributions well to a reasonable extent. However, we also observe from the figure that there are negative weights, since Equation~\eqref{equ:nd-quadratrure-rules} does not guarantee non-negativity of the product of the inner products. With negative weights, the discrete measures generated by the moments are signed. The negative weights may also result in numerical instabilities (e.g., the quadrature for positive integrands may become negative), but on the other hand, it follows from Proposition~\ref{prop:quarature-convergence} that the sum of the negative weights admits an upper bound~\cite[Chap.~12.3, Thm.~8]{Krylov2005}.
	
	The uniform distribution in the middle of Figure~\ref{fig:nd-quadrature-rules} is an example that we can factorise the two-dimensional distribution into that of two independent random variables. We see that the majority of the nodes have zero weights, and that the nodes with non-zero weights agree with that of the product rule for independent variables~\cite[Chap.~5.6]{Davis1984}. The product rule is commonly used for constructing multidimensional numerical integration rules, for instance, the multidimensional Gauss--Hermite quadrature used in Gaussian filters~\cite{Kazufumi2000} and Edgeworth filters~\cite{Challa2000,Singer2008}. Now our new method generalises the product rule to cases where it is impossible to factorise the joint probability distribution into independent ones (e.g., the left and right sides of Figure~\ref{fig:nd-quadrature-rules}).
	
	The quadrature method in its current form is computationally demanding for high-dimensional integrations. Recall that $S = \binom{N - 1 + d}{N - 1}$, hence, if we fix $d$ then $S \sim O(N^d \, / \, d!)$. Due to the Cartesian product construction, the number of quadrature nodes is of $O\bigl(N^{d^2} \, / \, (d!)^d\bigr)$ which grows polynomially in $N$ of degree $d^2$. If we instead fix $N$, then $S \sim O(d^{N - 1} \, / \, (N - 1)!)$, thus, the number of quadrature nodes grows faster than the exponential speed in $d$. There are a few ways to reduce the number of quadrature nodes, for instance, by using L\'{a}nczos iterations to solve the matrix-vector multiplications in Equation~\eqref{equ:nd-quadratrure-rules-compact}. It also makes sense to come up with a sparse version, since Figure~\ref{fig:nd-quadrature-rules} shows that there are plenty of insignificant weights, in particular when $N$ is large. As an extreme example, if the probability distribution is elliptical and is thin along a direction, then the quadrature would be significantly inefficient, since the quadrature nodes spread in a rectangle. 
	
	\begin{remark}
		\label{remark:modified-moments}
		Using raw moments can lead to ill-conditioned Gram matrices~\cite{Golub2010}. To improve the condition number, we can make use of central or scaled moments. Specifically, if $(w, \lambda)$ is a pair of quadrature weight and node generated by the raw moments $M^N$, then $(w, \sigma \, \lambda + \mu)$ is the corresponding pair of weight and node generated by the scaled central moment $\bigl\lbrace \expec{((X - \mu) \, / \, \sigma)^\cu{n}} \colon \abs{\cu{n}} \leq 2 \, N - 1 \bigr\rbrace$, where $\mu$ and $\sigma\neq0$ can be the mean and any scaling factor, respectively.
	\end{remark}
	
	\section{Filtering with moment representation}
	\label{sec:filtering}
	Recall the definition of the filtering problem in Section~\ref{sec:intro} and the objectives that we aim to solve:
	\begin{equation*}
		\begin{split}
			M_k^N &\coloneqq \lbrace m_{k, \cu{n}} \colon \abs{\cu{n}} \leq 2 \, N - 1 \rbrace, \\
			m_{k, \cu{n}} &\coloneqq \expec{X_k^\cu{n} \cond Y_{1:k}} \coloneqq \int x^\cu{n} \diff \PP_{X_k \cond Y_{1:k}}(x),
		\end{split}
	\end{equation*}
	where $\PP_{X_k \cond Y_{1:k}}$ is the filtering probability measure at time $t_k$. In this section, we apply the moment quadrature introduced in Section~\ref{sec:nd-quadrature} to construct an approximation 
	\begin{equation*}
		\begin{split}
			\widehat{M}_k^N &\coloneqq \lbrace \widehat{m}_{k, \cu{n}} \colon \abs{\cu{n}} \leq 2 \, N - 1 \rbrace, \quad m_{k, \cu{n}} \approx \widehat{m}_{k, \cu{n}}, 
		\end{split}
	\end{equation*}
	such that $\widehat{m}_{k, \cu{n}}$ converges to $m_{k, \cu{n}}$ as $N\to\infty$ for any step $k$. Moreover, this algorithm is recursive in time in the way that the approximation $\widehat{M}_k^N$ at $t_k$ is only given by the previous approximation $\widehat{M}_{k-1}^N$ at $t_{k-1}$ and the current measurement $Y_k$. The recursion starts from the moments $M_0^N$ of the initial $\PP_{X_0}$ which we know exactly. This filter is detailed as follows.
	
	Suppose that at any time $t_{k-1}$ we have the approximate (or exact) $\widehat{M}_{k-1}^N$ of the true moments in $M_{k-1}^N$. The approximation must be valid in the sense that the Gram matrix built by the approximate moments is positive definite. Let us denote the quadrature rules generated by Algorithm~\ref{alg:nd-moment-quadrature} based on $\widehat{M}_{k-1}^N$ as $\lbrace w_{k-1, \cu{n}}, \lambda_{k-1, \cu{n}} \rbrace_{\cu{n} \in \mathfrak{n}_{N, d}}$. Then, we can propagate $\widehat{M}_{k-1}^N$ through the SDE to approximate that of the measure $\PP_{X_k \cond Y_{1:k-1}}$ at time $t_k$. Specifically, by Chapman--Kolmogorov equation, the moments
	\begin{equation}
		\begin{split}
			\int x^\cu{n} \diff \PP_{X_k \cond Y_{1:k-1}}(x) &= \int \expec{X_k^\cu{n} \cond X_{k-1} = x} \diff \PP_{X_{k-1} \cond Y_{1:k-1}}(x) \\
			&\approx \sum_{\cu{q} \in \mathfrak{n}_{N, d}} w_{k-1, \cu{q}} \expec{X_k^\cu{n} \cond X_{k-1} = \lambda_{k-1, \cu{q}}} \coloneqq \vscriptalign{\overline{m}}_{k, \cu{n}},
			\label{equ:filter-pred}
		\end{split}
	\end{equation}
	for all $\cu{n}$ such that $\abs{\cu{n}} \leq 2 \, N - 1$, are approximated by $\vscriptalign{\overline{M}}_k^N \coloneqq \lbrace \vscriptalign{\overline{m}}_{k, \cu{n}} \colon \abs{\cu{n}} \leq 2 \, N - 1 \rbrace$. The conditional expectation $x \mapsto \expec{X_k^\cu{n} \cond X_{k-1} = x}$ in the integral is the key that characterises the transition of the moments in time. However, the conditional expectation for the SDE is often intractable, and we have to find an approximation so as to evaluate the quadrature nodes.
	
	One commonly used approximation is the Euler--Maruyama scheme (or other Gaussian-based approximations). By applying this scheme, we are approximating the conditional expectation $\expec{X_k^\cu{n} \cond X_{k-1} = x}$ by the $\cu{n}$-moment of a multivariate Normal random variable with mean $x + a(x) \, (t_k - t_{k-1})$ and covariance $b(x) \, b(x)^\trans (t_k - t_{k-1})$. The moment of such is analytically available by Isserlis' theorem, however, its computational cost is astronomical (e.g., it needs at least $(\abs{\cu{n}} - 1)!!$ summations over the covariance matrix elements). Although it is possible to significantly reduce the cost by Kan--Magnus method~\cite{Kan2008}, the computation required is still an expensive function of $\abs{\cu{n}}$. Apart from the computational difficulty, it is also hard to improve the approximation error. As an example, computing the moment based on higher-order It\^{o}--Taylor discretisations (e.g., Milstein) in closed form is possible only under limited conditions (e.g., diagonal dispersion $b$).
	
	Provided that the SDE coefficients are sufficiently smooth, we can represent the conditional expectation by a $J$ order Taylor moment expansion (TME)~\cite{ZhaoTME2020, Zhao2021Thesis}
	\begin{equation}
		\begin{split}
			\expec{g(X_k) \cond X_{k-1} = x} &= \sum_{j=0}^J (A^j g)(x) \, \frac{(t_k - t_{k-1})^j}{j!} + R(x, J, g, t_k, t_{k-1}),\\
			(Ag)(x) &\coloneqq \bigl(\nabla_x g(x)\bigr)^\trans a(x) + \frac{1}{2} \traceBig{b(x) \, b(x)^\trans \, \hessian_x g(x)}, 
			\label{equ:tme}
		\end{split}
	\end{equation}
	by choosing $g(x) = x^\cu{n}$, where $\nabla_x$ and $\hessian_x$ denote the gradient and Hessian, respectively, $A$ is the infinitesimal generator, and $R$ is the remainder. If the time interval $t_k - t_{k-1}$ is not significantly large, we can discard $R$ and use the truncated term as the approximation which converges as $J\to \infty$. The upside of this TME method is that the computation is scalable for approximating high-order moments. More precisely, unlike the Euler--Maruyama scheme, the number of calculations required in Equation~\eqref{equ:tme} is independent of $\cu{n}$ once we fix $J$. Although the equation has iterative gradients and Hessians, they are not difficult to implement with the help of automatic differentiations and Jacobian/Hessian-vector product solvers. The downside of this approach is that the approximate moments do not guarantee to form a positive definite Gram matrix, due to the truncation error. 
	
	We can then use $\vscriptalign{\overline{M}}_k^N$ to approximate that of $\PP_{X_k \cond Y_{1:k}}$. By Bayes' rule, we apply the change-of-measure
	\begin{equation}
		\frac{\diff \PP_{X_k \cond Y_{1:k}}}{\diff \PP_{X_k \cond Y_{1:k-1}}}(x) = \frac{p_{Y_k \cond X_k}(Y_k \cond x)}{\int p_{Y_k \cond X_k}(Y_k \cond z) \diff \PP_{X_k \cond Y_{1:k-1}}(z)},
		\label{equ:change-of-measure}
	\end{equation}
	thus, the $\cu{n}$-moment of $\PP_{X_k \cond Y_{1:k}}$ is given by
	\begin{equation}
		\begin{split}
			m_{k, \cu{n}} &= \frac{1}{h_k} \int x^\cu{n} \, p_{Y_k \cond X_k}(Y_k \cond x) \diff \PP_{X_k \cond Y_{1:k-1}}(x), \\
			h_k &\coloneqq \int p_{Y_k \cond X_k}(Y_k \cond x) \diff \PP_{X_k \cond Y_{1:k-1}}(x),
			\label{equ:filtering-moment-and-ell}
		\end{split}
	\end{equation}
	which we can again approximate by applying the quadrature method. Specifically, let $\lbrace \vscriptalign{\overline{w}}_{k, \cu{n}}, \vscriptalign{\overline{\lambda}}_{k, \cu{n}} \rbrace_{\cu{n} \in \mathfrak{n}_{N, d}}$ be the quadrature rules generated by $\vscriptalign{\overline{M}}_k^N$, then the approximate moment for $\PP_{X_k \cond Y_{1:k}}$ is 
	\begin{equation*}
		\begin{split}
			m_{k, \cu{n}} \approx \widehat{m}_{k, \cu{n}}&\coloneqq\sum_{\cu{q} \in \mathfrak{n}_{N, d}} \vscriptalign{\overline{w}}_{k, \cu{q}} \, (\vscriptalign{\overline{\lambda}}_{k, \cu{q}})^\cu{n} \, p_{Y_k \cond X_k}(Y_k \cond \vscriptalign{\overline{\lambda}}_{k, \cu{q}}) \, / \, \widehat{h}^N_k, \\
			\widehat{h}_k^N &\coloneqq \sum_{\cu{n} \in \mathfrak{n}_{N, d}} \vscriptalign{\overline{w}}_{k, \cu{n}} \, p_{Y_k \cond X_k}(Y_k \cond \vscriptalign{\overline{\lambda}}_{k, \cu{n}}).
		\end{split}
	\end{equation*}
	With the approximate moments $\widehat{M}^N_k$, we can then compute the next $\vscriptalign{\widehat{M}}^N_{k+1}$, and so forth for any step $k$ by repeating this process. 
	
	Observe that $h_k = p_{Y_{k} \cond Y_{1:k-1}}(Y_k \cond Y_{1:k-1})$. Hence, the filtering routine simultaneously enables an approximation to the negative log-likelihood
	\begin{equation}
		\ell(Y_{1:k}) \coloneqq -\log p_{Y_{1:k}}(Y_{1:k}) = -\sum_{j=1}^k \log h_k \approx -\sum_{j=1}^k \log \widehat{h}_{j}^N,
		\label{equ:neg-ll}
	\end{equation}
	which we can use to estimate parameters in the filtering model by maximum likelihood. Moreover, the likelihood approximated by the filter is differentiable with respect to the model parameters. Hence, it is straightforward to leverage efficient gradient-based optimisation algorithms by means of automatic differentiations.
	
	We summarise the filter with moment representations in the following algorithm. 	
	\begin{algorithm2e}[h]
		\SetAlgoLined
		\DontPrintSemicolon
		\KwInputs{Order $N$, measurements $Y_{1:T}$, and initial moments $M_0^N$}
		\KwOutputs{Moments $\widehat{M}_1^N, \widehat{M}_2^N, \ldots, \widehat{M}_T^N$ and negative log-likelihood $\widehat{\ell}$}
		$\widehat{M}^N_0$ = $M^N_0$\;
		$\widehat{\ell} = 0$\;
		\For{$k=1$ \KwTo $T$}{%
			\tcp{Prediction step}
			$\lbrace w_{k-1, \cu{n}}, \lambda_{k-1, \cu{n}} \rbrace_{\cu{n} \in \mathfrak{n}_{N, d}}=$ \FnMQuad{$\widehat{M}^N_{k-1}$}\;
			\For(\tcp*[f]{In parallel}){$\cu{n}\colon \abs{\cu{n}} \leq 2 \, N - 1$}{%
				$\vscriptalign{\overline{m}}_{k, \cu{n}} = \sum_{\cu{q} \in \mathfrak{n}_{N, d}} w_{k-1, \cu{q}} \expec{X_k^\cu{n} \cond X_{k-1} = \lambda_{k-1, \cu{q}}} $\;
			}
			$\vscriptalign{\overline{M}}^N_{k}= \lbrace \vscriptalign{\overline{m}}_{k, \cu{n}} \colon \abs{\cu{n}} \leq 2 \, N - 1 \rbrace$\;
			\tcp{Update step}
			$\lbrace \overline{w}_{k, \cu{n}}, \overline{\lambda}_{k, \cu{n}} \rbrace_{n \in \mathfrak{n}_{N, d}}=$ \FnMQuad{$\vscriptalign{\overline{M}}^N_{k}$}\;
			$\widehat{h}_k^N = \sum_{\cu{n} \in \mathfrak{n}_{N, d}} \vscriptalign{\overline{w}}_{k, \cu{n}} \, p_{Y_k \cond X_k}(Y_k \cond \vscriptalign{\overline{\lambda}}_{k, \cu{n}})$\;
			\For(\tcp*[f]{In parallel}){$\cu{n}\colon \abs{\cu{n}} \leq 2 \, N - 1$}{%
				$\widehat{m}_{k, \cu{n}} = \sum_{\cu{q} \in \mathfrak{n}_{N, d}} \vscriptalign{\overline{w}}_{k, \cu{q}} \, (\vscriptalign{\overline{\lambda}}_{k, \cu{q}})^\cu{n} \, p_{Y_k \cond X_k}(Y_k \cond \vscriptalign{\overline{\lambda}}_{k, \cu{q}}) \, / \, \widehat{h}_k^N$\;
			}
			$\widehat{\ell} = \widehat{\ell} - \log \widehat{h}_k^N$\;
			$\widehat{M}^N_{k}= \lbrace \widehat{m}_{k, \cu{n}} \colon \abs{\cu{n}} \leq 2 \, N - 1 \rbrace$\;
		}
		\caption{Moment filter}
		\label{alg:moment-filter}
	\end{algorithm2e}

	\subsection{Computational complexity}
	\label{sec:filter-complexity}
	As shown in Algorithm~\ref{alg:moment-filter}, the moment filter is a sequential algorithm in time, hence, the time complexity is linear in the number of measurements $T$. At each filtering step, the computation cost is dominated by either the Cholesky decomposition of the Gram matrix (of size $S\times S$), or summing over the quadrature evaluations (of length $S^d$) depending on the actual implementation. If the summation is implemented schoolbook sequentially, then the summation complexity is $O(S^d)$ which is greater than that of the Cholesky decomposition if $d > 3$. On the other hand, if the summation is implemented in parallel, then the cost of the Cholesky decomposition dominates, which is $O(S^3)$. At every step, the filter computes the summation and Cholesky decomposition three and two times, respectively. Overall, Algorithm~\ref{alg:moment-filter} has time complexity
	\begin{itemize}
		\item $O(2 \, T \, S^3)$, if the summation is implemented in parallel, 
		\item $O(2 \, T \, S^3)$, if the summation is implemented sequentially and $d \leq 3$, or 
		\item $O(3 \, T \, S^d)$, if the summation is implemented sequentially and $d > 3$.
	\end{itemize}
	Recall that $S = \binom{N - 1 + d}{N - 1}$, and that $S \sim O(N^d \, / d!)$ if we fix the dimension $d$. Therefore, the time complexity of the filter is polynomial in the order $N$, and the degree of the polynomial is determined by the state dimension and the actual implementation of the filter.
	
	The time complexity of the filter is not significantly impacted by the measurement variable dimension $d_y$. At each step $k$, the filter evaluates the measurement density function $x \mapsto p_{Y_k \cond X_k}(Y_k \cond x)$ by the quadrature nodes. The complexity of each evaluation depends on $d_y$, for example, $O(d_y^3)$ for multivariate Normal distributions. But on the other hand, these evaluations are independent which can be done in parallel, hence, the complexity is not multiplied by the number of quadrature nodes. In practice, $d_y$ is far less than $S$ when the order $N$ is large. 
	
	\subsection{Convergence analysis}
	\label{sec:filter-convergence}
	We aim to show that the moment filter converges in both moments and distribution as $N\to\infty$ at every filtering step. This is intuitive, since the moment quadrature that we use is exact for polynomials of degree equal to or less than $2 \, N - 1$. If additionally the limiting distribution is determined by its moments, then the approximation converges in distribution too followed by the method of moments. 
	
	Recall that the filter is a recursive chain of approximations. Therefore, if we can prove that for any fixed step $k-1$ the convergence of  $\widehat{M}^N_{k-1}$ to $\PP_{X_{k-1} \cond Y_{1:k-1}}$ implies the convergence of $\widehat{M}^N_{k}$ to $\PP_{X_{k} \cond Y_{1:k}}$, then by mathematical induction the filter converges at every step as long as the initial moments converges to $\PP_{X_0}$. To keep the results clean, let us use shorthand $I_\mu(f) \coloneqq \int f \diff \mu$, and denote $\delta_x$ the Dirac measure at any point $x$. If a measure $\mu$ converges weakly and in moments to another measure $\nu$, then we denote $\mu\xrightarrow{\mathrm{w.m.}}\nu$. All the probability measures/distributions in this section operate on the same canonical space $(\R^d, \mathcal{B}(\R^d))$, where $\mathcal{B}$ stands for the Borel sigma-algebra. Then, we make the following lemma that backbones the filtering convergence for any fixed step. 
	\begin{lemma}
		\label{lemma:convergence-measure}
		Let $\mu$ and $\nu$ be two probability measures such that 1) $\nu$ is determined by its moments; 2) for any monomial $\eta_{\cu{n}}$, there is a bounded continuous, or polynomial function $g_{\eta_{\cu{n}}}$ that $I_\nu(\eta_{\cu{n}}) = I_{\mu}(g_{\eta_{\cu{n}}})$. Let $\widehat{\mu}_N$ be any finite measure that $\widehat{\mu}_N\xrightarrow{\mathrm{w.m.}}\mu$ as $N\to\infty$, and suppose that the Gram matrix generated by $\vscriptalign{\mathcal{M}}^N\coloneqq \lbrace I_{\widehat{\mu}_N}(g_{\eta_{\cu{n}}}) \colon \abs{\cu{n}} \leq 2 \, N - 1\rbrace$ is positive definite for all $N$. Then the measure $\widehat{\nu}_N \coloneqq \sum_{\cu{n} \in \mathfrak{n}_{N, d}} w_\cu{n} \, \delta_{\lambda_{\cu{n}}} \xrightarrow{\mathrm{w.m.}} \nu$ as $N\to\infty$, where $\lbrace w_\cu{n}, \lambda_{\cu{n}}\rbrace_{\cu{n} \in \mathfrak{n}_{N, d}}$ are the quadrature rules generated by $\mathcal{M}^N$.
	\end{lemma}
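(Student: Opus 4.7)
The approach is to use the polynomial exactness of the moment quadrature (Lemma~\ref{lemma:poly-exactness}) to express the moments of $\widehat{\nu}_N$ in terms of $\widehat{\mu}_N$, thereby reducing convergence of $\widehat{\nu}_N$ to the assumed convergence $\widehat{\mu}_N \xrightarrow{\mathrm{w.m.}} \mu$, and then to promote moment convergence to weak convergence via the method of moments.

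First, I would establish moment convergence. By construction, the Gram matrix $G_S$ and the Hankel matrices $H_{S,i}$ used to generate the quadrature rules $\lbrace w_{\cu{n}}, \lambda_{\cu{n}} \rbrace$ are filled with entries from $\mathcal{M}^N = \lbrace I_{\widehat{\mu}_N}(g_{\eta_{\cu{n}}}) \colon \abs{\cu{n}} \leq 2\,N - 1\rbrace$, so the inner product they encode assigns to the $\cu{n}$-th moment the value $I_{\widehat{\mu}_N}(g_{\eta_{\cu{n}}})$. Lemma~\ref{lemma:poly-exactness} then yields, for every $\cu{m}$ with $\abs{\cu{m}} \leq 2\,N - 1$,
\begin{equation*}
I_{\widehat{\nu}_N}(\eta_{\cu{m}}) = \sum_{\cu{n}\in\mathfrak{n}_{N,d}} w_{\cu{n}}\,\lambda_{\cu{n}}^{\cu{m}} = e_0^\trans\,\mathring{H}_S^{\cu{m}}\,e_0 = I_{\widehat{\mu}_N}(g_{\eta_{\cu{m}}}).
\end{equation*}
Fix $\cu{m}$. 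For $N$ large enough this identity applies, and since $g_{\eta_{\cu{m}}}$ is either bounded continuous or polynomial, $\widehat{\mu}_N \xrightarrow{\mathrm{w.m.}} \mu$ delivers $I_{\widehat{\mu}_N}(g_{\eta_{\cu{m}}}) \to I_\mu(g_{\eta_{\cu{m}}}) = I_\nu(\eta_{\cu{m}})$ by assumption~(2). Hence $\widehat{\nu}_N \to \nu$ in moments.

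Next, I would promote this to weak convergence. Because $\nu$ is moment-determined, the method of moments applies in principle. For probability measures this is the Frechet--Shohat theorem; however, $\widehat{\nu}_N$ is signed in general, so an additional uniform bound on the total variation $\sum_{\cu{n}}\abs{w_{\cu{n}}}$ must be established. The total mass $I_{\widehat{\nu}_N}(1) = I_{\widehat{\mu}_N}(g_1)$ converges, and the absolute sum of negative weights is controlled by a Krylov-type estimate of the form indicated after Proposition~\ref{prop:quarature-convergence}, which together give uniform total variation. With this in hand, the weak limit is obtained by a Stone--Weierstrass splitting mirroring the proof of Proposition~\ref{prop:quarature-convergence}: for any bounded continuous test $f$, approximate uniformly by a polynomial $\rho_\epsilon$ on a compact set carrying the support of $\nu$ (otherwise combine with a tightness argument that follows from moment-determinedness), and bound $\abs{I_{\widehat{\nu}_N}(f) - I_\nu(f)}$ by the triangle inequality in the three pieces $\abs{I_{\widehat{\nu}_N}(f - \rho_\epsilon)}$, $\abs{I_{\widehat{\nu}_N}(\rho_\epsilon) - I_\nu(\rho_\epsilon)}$, $\abs{I_\nu(\rho_\epsilon - f)}$, sending $N\to\infty$ and then $\epsilon\to 0$.

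The main obstacle is controlling the signedness of $\widehat{\nu}_N$ uniformly in $N$: moment convergence itself is a nearly one-line consequence of Lemma~\ref{lemma:poly-exactness} together with the hypothesis on $g_{\eta_{\cu{m}}}$, but passing to weak convergence requires the uniform total-variation bound together with tightness. Once that bound is secured, method-of-moments closes the argument because the moment-determinedness of $\nu$ rules out any other subsequential weak limit.
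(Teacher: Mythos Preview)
Your moment-convergence argument is essentially identical to the paper's: use the polynomial exactness of the quadrature (Lemma~\ref{lemma:poly-exactness}) so that $I_{\widehat{\nu}_N}(\eta_{\cu{m}}) = I_{\widehat{\mu}_N}(g_{\eta_{\cu{m}}})$ for $\abs{\cu{m}}\le 2N-1$, then pass to the limit using $\widehat{\mu}_N\xrightarrow{\mathrm{w.m.}}\mu$ and assumption~(2). The paper records the same identity and the same limit in two lines.

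Where you diverge is the promotion to weak convergence. The paper simply invokes the Fr\'echet--Shohat theorem after establishing moment convergence, without discussing the signedness of $\widehat{\nu}_N$. You correctly flag that the classical Fr\'echet--Shohat statement is for probability measures, and you propose to repair this via a uniform total-variation bound (Krylov-type control of negative weights) plus a Stone--Weierstrass/tightness splitting. The concern is legitimate---the paper glosses over it---but your proposed fix is not solid as written: the Krylov bound you cite (after Proposition~\ref{prop:quarature-convergence}) and the Stone--Weierstrass argument both rely on compact support, which is precisely the hypothesis of Proposition~\ref{prop:quarature-convergence} and is \emph{not} assumed in the lemma. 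Likewise, ``tightness from moment-determinedness of $\nu$'' does not directly yield tightness of the signed sequence $\widehat{\nu}_N$; one would need uniform control of, say, even moments of $\widehat{\nu}_N$, which you have only for each fixed order and not uniformly in $N$. So your route identifies a real subtlety the paper's proof skips, but the machinery you invoke to close it does not apply without the compact-support hypothesis, and the argument remains incomplete on that point.
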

	\begin{proof}
		Since $\widehat{\mu}_N$ converges weakly and in moments to $\mu$, we have for any monomial $\eta_{\cu{n}}$ the expectation $I_{\widehat{\mu}_N}(g_{\eta_{\cu{n}}}) \to I_{\mu}(g_{\eta_{\cu{n}}})$ as $N\to\infty$. Therefore, the approximate moment $I_{\widehat{\mu}_N}(g_{\eta_{\cu{n}}})$ converges to the true moment $I_\nu(\eta_\cu{n})$ of $\nu$ for all $\cu{n}$. By the definition of the quadrature method, the moment of $\widehat{\nu}_N$ is
		\begin{equation*}
			I_{\widehat{\nu}_N}(\eta_\cu{n}) = 
			\begin{cases}
				I_{\widehat{\mu}_N}(g_{\eta_{\cu{n}}}) \in \mathcal{M}^N, & \abs{\cu{n}} \leq 2 \, N - 1, \\
				\sum_{\cu{q} \in \mathfrak{n}_{N, d}} w_\cu{q} \, \eta_\cu{n}(\lambda_{\cu{q}}) < \infty, & \abs{\cu{n}} > 2 \, N - 1,
			\end{cases}
		\end{equation*}
		which is finite for all $\cu{n}$ and $N$. Hence, for any fixed $\cu{n}$, we can always find a large enough $N$ such that $I_{\widehat{\nu}_N}(\eta_\cu{n}) = I_{\widehat{\mu}_N}(g_{\eta_{\cu{n}}})$, and that $I_{\widehat{\nu}_N}(\eta_\cu{n}) \to I_\nu(\eta_\cu{n})$ as $N\to\infty$. This proves that $\widehat{\nu}_N$ converges in moment to $\nu$. It is then followed by Fr\'{e}chet--Shohat theorem~\cite[pp. 540]{Frechet1931} that the measure $\widehat{\nu}_N$ converges weakly to $\nu$ as well.
	\end{proof}

	The convergence of the moment filter is then a result of iterative applications of Lemma~\ref{lemma:convergence-measure}. This result is concluded in Proposition~\ref{prop:convergence-filter} under the following model assumptions.
	
	\begin{assumption}
		\label{assump:moment-determint-filter}
		The distributions $\PP_{X_0}$, $\PP_{X_k \cond Y_{1:k}}$, and $\PP_{X_k \cond Y_{1:k-1}}$ for $k\geq 1$ are determined by their moments. 
	\end{assumption}
	\begin{assumption}
		\label{assump:init}
		The initial approximation $\widehat{\PP}^N_{X_0}\coloneqq \sum_{\cu{n} \in \mathfrak{n}_{N, d}} w_{k, \cu{n}} \, \delta_{\lambda_{k, \cu{n}}} \xrightarrow{\mathrm{w.m.}} \PP_{X_0}$.
	\end{assumption}
	\begin{assumption}
		\label{assump:integrand-regularity}
		For every $\cu{n}$, the function $x\mapsto \expec{X_k^\cu{n} \cond X_{k-1} = x}$ is bounded and continuous, or polynomial. Almost surely for every $\cu{n}$ and $k$, the functions $x \mapsto p_{Y_k \cond X_k}(\cdot\cond x)$ and $x \mapsto x^\cu{n} \, p_{Y_k \cond X_k}(\cdot \cond x)$ are bounded and continuous, or polynomial. 
	\end{assumption}
	\begin{assumption}
		\label{assump:stable-filter}
		For all $N\geq 1$ and $k\geq 0$, the Gram matrices generated by the moments in $\widehat{M}_k^N$ and $\vscriptalign{\overline{M}}^N_k$ are positive definite.
	\end{assumption}
	\begin{proposition}
		\label{prop:convergence-filter}
		Suppose that Assumptions~\ref{assump:moment-determint-filter} to~\ref{assump:stable-filter} are satisfied. Then almost surely for every $k\geq 0$, 
		\begin{equation}
			\widehat{\PP}^N_{X_k \cond Y_{1:k}} \xrightarrow{\mathrm{w.m.}} \PP_{X_k \cond Y_{1:k}},
			\label{equ:filter-convergent}
		\end{equation}
		as $N\to\infty$, where $\widehat{\PP}^N_{X_k \cond Y_{1:k}}\coloneqq \sum_{\cu{n} \in \mathfrak{n}_{N, d}} w_{k, \cu{n}} \, \delta_{\lambda_{k, \cu{n}}}$ is the discrete measure given by the moments $\widehat{M}_k^N$. Moreover, the approximate likelihood $\widehat{h}^N_k \to h_k$ almost surely.
	\end{proposition}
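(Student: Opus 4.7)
The plan is to prove Equation~\eqref{equ:filter-convergent} by induction on $k \geq 0$, with Lemma~\ref{lemma:convergence-measure} doing the work at each filtering step. The base case $k=0$ is Assumption~\ref{assump:init}, and the likelihood claim $\widehat{h}^N_k \to h_k$ will drop out of the prediction half of each inductive round, so it need not be tracked as a separate induction.

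For the prediction half of step $k$, assume as the inductive hypothesis that $\widehat{\PP}^N_{X_{k-1} \cond Y_{1:k-1}} \xrightarrow{\mathrm{w.m.}} \PP_{X_{k-1} \cond Y_{1:k-1}}$. I would apply Lemma~\ref{lemma:convergence-measure} with $\mu = \PP_{X_{k-1} \cond Y_{1:k-1}}$, $\nu = \PP_{X_{k} \cond Y_{1:k-1}}$, $\widehat{\mu}_N = \widehat{\PP}^N_{X_{k-1} \cond Y_{1:k-1}}$, and $g_{\eta_\cu{n}}(x) = \expec{X_k^\cu{n} \cond X_{k-1} = x}$. The regularity of $g_{\eta_\cu{n}}$ is supplied by Assumption~\ref{assump:integrand-regularity}, Chapman--Kolmogorov gives $I_\nu(\eta_\cu{n}) = I_\mu(g_{\eta_\cu{n}})$, and Assumption~\ref{assump:stable-filter} provides positive definiteness of the Gram matrix of $\vscriptalign{\overline{M}}_k^N$. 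Since the predicted moments in Equation~\eqref{equ:filter-pred} are exactly $I_{\widehat{\mu}_N}(g_{\eta_\cu{n}})$, Lemma~\ref{lemma:convergence-measure} applies verbatim and yields $\vscriptalign{\overline{\PP}}_k^N \xrightarrow{\mathrm{w.m.}} \PP_{X_k \cond Y_{1:k-1}}$.

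For the update half, Lemma~\ref{lemma:convergence-measure} does not apply directly because the Bayes normaliser $1 / h_k$ in Equation~\eqref{equ:filtering-moment-and-ell} is replaced in Algorithm~\ref{alg:moment-filter} by the empirical $1 / \widehat{h}^N_k$, so I would argue by hand. By Assumption~\ref{assump:integrand-regularity} the maps $x \mapsto p_{Y_k \cond X_k}(Y_k \cond x)$ and $x \mapsto x^\cu{n} \, p_{Y_k \cond X_k}(Y_k \cond x)$ are almost surely bounded continuous or polynomial, so the w.m.~convergence from the prediction half yields both $\widehat{h}^N_k \to h_k$ (which proves the likelihood claim) and convergence of the unnormalised numerator to $h_k \, m_{k,\cu{n}}$. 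Taking the ratio gives $\widehat{m}_{k,\cu{n}} \to m_{k,\cu{n}}$ for every fixed multi-index $\cu{n}$. Quadrature exactness (Lemma~\ref{lemma:poly-exactness}) ensures that for $N$ large enough with $\abs{\cu{n}} \leq 2\,N-1$, the $\cu{n}$-th moment of $\widehat{\PP}^N_{X_k \cond Y_{1:k}}$ coincides with $\widehat{m}_{k,\cu{n}}$. Assumption~\ref{assump:moment-determint-filter} together with Fr\'{e}chet--Shohat then upgrades moment convergence to weak convergence, closing the induction.

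The main technical obstacle is controlling the data-driven ratio in the update half so that the perturbation between $\widehat{h}^N_k$ and $h_k$ does not spoil the moment-by-moment convergence, and verifying that Assumption~\ref{assump:stable-filter} keeps the quadrature rules well-defined on both halves of every step. A minor bookkeeping concern is the almost-sure qualifier: since the only source of randomness is $Y_{1:k}$, I would fix once and for all the null set on which Assumption~\ref{assump:integrand-regularity} fails for some $k$ and $\cu{n}$, and run the entire deterministic induction on its complement.
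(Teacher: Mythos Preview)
Your proposal is correct and follows the same inductive skeleton as the paper: base case from Assumption~\ref{assump:init}, then at each step invoke Lemma~\ref{lemma:convergence-measure} for the prediction with $g_{\eta_\cu{n}}(x)=\expec{X_k^\cu{n}\cond X_{k-1}=x}$, and close the update by moment convergence plus Fr\'{e}chet--Shohat. The one place you diverge is the update half: the paper simply reapplies Lemma~\ref{lemma:convergence-measure} with $g_{\eta_\cu{n}}(x)=x^\cu{n}\,p_{Y_k\cond X_k}(Y_k\cond x)/h_k$, whereas you flag the mismatch between $h_k$ and $\widehat{h}^N_k$ and argue the ratio by hand. Your concern is legitimate but not an obstruction to the paper's shortcut: the quadrature construction in Section~\ref{sec:quadrature} is invariant under rescaling all moments by a common positive constant (the factor cancels in $L_S^{-1}H_{S,i}(L_S^\trans)^{-1}$), so the discrete measure built from the moments normalised by $h_k$ is identical to the one built from the algorithm's moments normalised by $\widehat{h}^N_k$. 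Thus both routes land on the same $\widehat{\PP}^N_{X_k\cond Y_{1:k}}$; yours is more explicit about the normaliser, the paper's is terser but leans on this unstated scale invariance.
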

	\begin{proof}
		Assumption~\ref{assump:init} implies that Equation~\eqref{equ:filter-convergent} holds for $k=0$ by definition. Suppose that $\widehat{\PP}^N_{X_{k-1} \cond Y_{1:k-1}} \xrightarrow{\mathrm{w.m.}} \PP_{X_{k-1} \cond Y_{1:k-1}}$ for any $k > 1$. Then by Equation~\eqref{equ:filter-pred} we have the relation
		\begin{equation*}
			I_{\PP_{X_k \cond Y_{1:k-1}}}(\eta_{\cu{n}}) = I_{\PP_{X_{k-1} \cond Y_{1:k-1}}}(g_{\eta_\cu{n}}),
		\end{equation*}
		by taking $g_{\eta_\cu{n}}(x) = \expec{X_k^\cu{n} \cond X_{k-1} = x}$. 
		Hence, Lemma~\ref{lemma:convergence-measure} concludes that the measure $\sum_{\cu{n} \in \mathfrak{n}_{N, d}} \overline{w}_{k, \cu{n}} \, \overline{\delta}_{\lambda_{k, \cu{n}}} \xrightarrow{\mathrm{w.m.}} \PP_{X_k \cond Y_{1:k}}$, and this holds almost surely. It follows that $\widehat{h}^N_k \to h_k$ almost surely. Next, Equation~\eqref{equ:change-of-measure} similarly indicates the relation $I_{\PP_{X_k \cond Y_{1:k}}}(\eta_{\cu{n}}) = I_{\PP_{X_k \cond Y_{1:k-1}}}(g_{\eta_\cu{n}})$ by taking $g_{\eta_\cu{n}}(x) = x^\cu{n} \, p_{Y_k \cond X_k}(Y_k \cond x) \, / \, h_k$. Hence, by applying Lemma~\ref{lemma:convergence-measure} again, we have $\widehat{\PP}^N_{X_k \cond Y_{1:k}} \xrightarrow{\mathrm{w.m.}} \PP_{X_k \cond Y_{1:k}}$. The statement of the proposition is then concluded by induction.
	\end{proof}

	The proposition above shows that the moment filter is asymptotically convergent under a few assumptions on the system. Assumptions~\ref{assump:moment-determint-filter} and~\ref{assump:init} are not restrictive, since there are a large class of distributions determined by moments (see sufficient conditions in e.g., \cite[Chap. 15]{AchimKlenke2014}). Assumption~\ref{assump:integrand-regularity} asks the integrands used in the moment filter be either bounded continuous or polynomial, since the convergence in both distribution and moments directly implies the convergence of such expectations. A trivial example that satisfies this assumption is any linear Gaussian system. It is possible to generalise Assumption~\ref{assump:integrand-regularity} for analytical functions as well, if the dominated convergence theorem applies for every order power series expansion of the integrand.
	
	Assumption~\ref{assump:stable-filter} can be ambiguous, since it does not explicitly clarify what types of systems can guarantee the moment matrices be positive definite. But on the other hand, it is fundamentally hard for any quadrature method to ensure the approximate moments be jointly valid (e.g., Monte Carlo). One solution to this problem is to introduce another approximation on top of the system. For instance, we can approximate the transition $\expec{X_k^\cu{n} \cond X_{k-1}=x}$ by Euler--Maruyama, so that each quadrature evaluation gives a Gaussian moment which is always valid by definition. Another solution, but numerically, is to use LDL decomposition instead of Cholesky, and then clip the non-positive diagonal elements to small epsilons~\cite{Cheng1998}. This amounts to a moment-matrix completion by finding a new set of valid moments that are nearest (in Frobenius norm) to the approximate moments.

	\begin{remark}
		\label{remark:quadrature-independence}
		It is important to remark that Lemma~\ref{lemma:convergence-measure} is independent of our quadrature methods introduced in Section~\ref{sec:quadrature}. Specifically, the lemma holds for any moment-based quadrature, as long as it gives finite approximations and exact integrations for monomials of a degree determined by $N$. As a consequence, the convergence of the moment filter does not break if we replace the moment-quadrature with any such.
	\end{remark}

	\begin{figure}[t!]
		\centering
		\includegraphics[width=\linewidth]{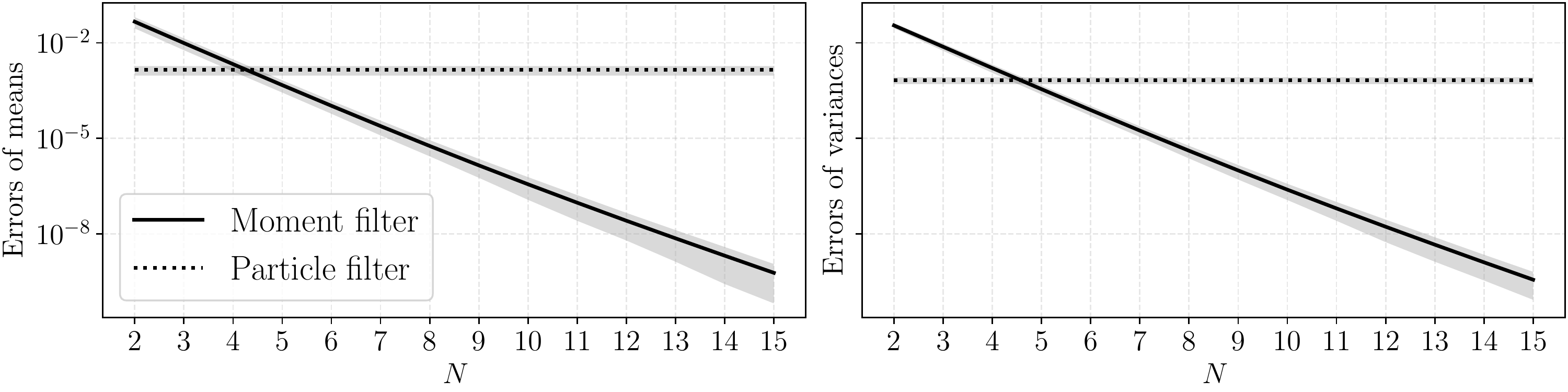}
		\includegraphics[width=\linewidth]{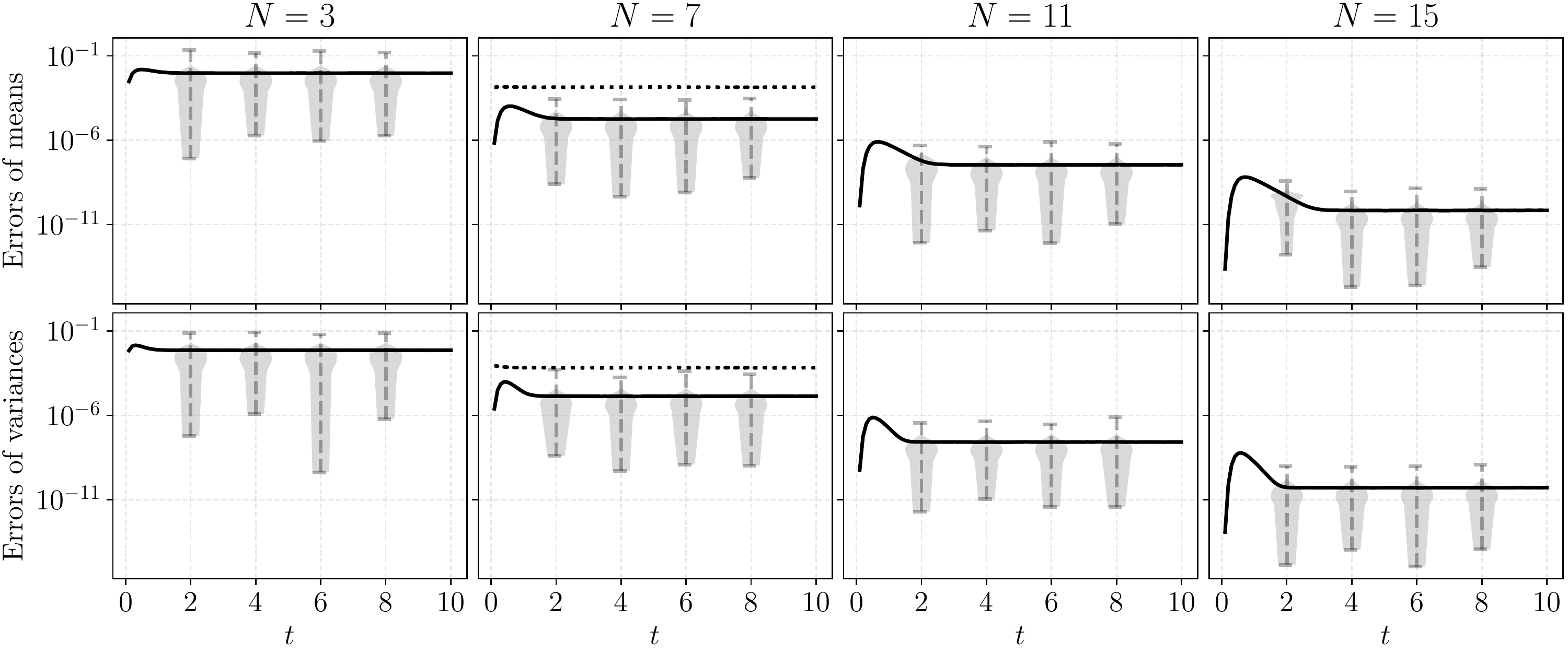}
		\caption{The filtering absolute errors (in log scale) for the model in Equation~\eqref{equ:model-lg}. The two figures in the first row show the errors as functions of the order $N$, where at each $N$ we average the errors over time, and then we plot the mean and two-standard deviation (i.e., the shaded area) from the MC simulations. The eight figures in the bottom two rows show the errors as functions of the time $t$, where we fix four $N$. The shaded violin plots show the distribution of the MC runs, with the dashed whiskers showing the extrema, and the solid black lines showing the means. The dotted lines show the mean errors of the particle filter for comparison.}
		\label{fig:convergence-lg}
	\end{figure}

	\section{Experiments}
	\label{sec:experiments}
	In this section, we conduct four experiments to numerically show the convergence of the moment filter, and to compare the performance against other commonly used filters. For numerical stability, we use central moments instead of raw moments in all the experiments (see, Remark~\ref{remark:modified-moments}). For approximating the conditional expectation in Equation~\eqref{equ:filter-pred}, we use the Taylor moment expansion of order three. All the experiments are implemented in the JAX of Python, and the implementations are published at \url{https://github.com/zgbkdlm/mfs} for reproducibility. \looseness=-1
	
	\subsection{Numerical convergence}
	\label{sec:experiment-convergence}
	To numerically show that the moment filter is convergent as $N\to \infty$, we test the filter on a linear Gaussian model
	\begin{equation}
		\begin{split}
			\diff X(t) &= -\frac{1}{\ell} \, X(t) \diff t +  \sqrt{\frac{2 \, \sigma^2}{\ell}} \diff W(t), \quad X(0) \sim \mathcal{N}(0, \sigma^2),\\
			Y_k &= X_k + \xi_k, \qquad\qquad\qquad\qquad\quad\quad\,\,\,\, \xi_k \sim \mathcal{N}(0, 1),
			\label{equ:model-lg}
		\end{split}
	\end{equation}
	where we fix the SDE parameters to $\ell = 1$ and $\sigma=0.5$. The reason for using this model is that we can exactly compute the true filtering distribution by a Kalman filter. To test the convergence statistically, we conduct 10,000 independent Monte Carlo (MC) simulations. For each MC simulation, we generate 100 measurements $Y_1, Y_2, \ldots, Y_{100}$ at evenly placed times $t_1 = 0.1, t_2 = 0.2, \ldots, t_{100} = 100$, respectively, then we compute the absolute errors of the filtering means and variances of the filter. More specifically, the two absolute errors are $\absbig{\widehat{m}_{k, 1} - \expec{X_k \cond Y_{1:k}}}$ and $\absbig{\widehat{v}_k - \varr{X_k \cond Y_{1:k}}}$ for $k=1,2,\ldots, 100$, where $\widehat{m}_{k, 1}$ and $\widehat{v}_k$ are the approximate filtering mean and variance, respectively.
	
	Furthermore, we apply a standard particle filter with 100,000 particles for comparison of the convergence. We use the stratified resampling at every step, and use the variance-optimal distribution~\cite[Thm. 10.1]{Chopin2020} as the proposal which is available in closed form for this model. 

	The results are shown in Figure~\ref{fig:convergence-lg}. From the first row of the figure, we see that the moment filtering error decreases as $N$ increases. Moreover, the convergence speed is numerically almost linear in the log scale, implying that the actual convergence speed is possibly a high degree polynomial of $N$. At $N=5$, the error of the moment filter starts to be better than that of the particle filter. On top of that, the moment filter with $N=5$ requires only 5 quadrature nodes and 10 moments, while the particle filter has 100,000 particles. When $N=15$, the convergence of the moment filter outperforms the particle filter out of a number of orders of magnitudes, while the actual running time of the moment filter is still faster than that of the particle filter.
	
	The bottom two rows in Figure~\ref{fig:convergence-lg} show the mean absolute errors as functions of time for a few fixed $N$. We see that at the initial time the errors are small, and then the errors increase as $t$ increases. This is true, since we know the exact moments of the initial random variable, and the moment filter by definition accumulates the filtering errors in time. However, we also observe that the errors shortly stop to increase (e.g., at $t\approx1$) and stabilise at certain levels. This suggests that the moment filter may have a bounded stability property which is worth investigating in future works.
	
	\subsection{Bene\v{s}--Bernoulli}
	\label{sec:benes-bernoulli}
	We next test the convergence of the moment filter with a non-linear SDE and Bernoulli binary measurements:
	\begin{equation}
		\begin{split}
			\diff X(t) &= \tanh(X(t)) \diff t + \diff W(t), \quad X(0) \sim \frac{1}{2}\bigl( \mathcal{N}(-0.5, 0.05) + \mathcal{N}(0.5, 0.05)\bigr), \\ 
			Y_k \cond X_k &\sim \mathrm{Bernoulli}\biggl(\frac{1}{1 + \exp(-X_k^3 \, / \, 5)}\biggr),
			\label{equ:benes-bernoulli}
		\end{split}
	\end{equation}
	whose filtering distribution is multimodal. For this model, it is hard to derive the exact solution analytically, but since the state is unidimensional, we can numerically compute the true solution by brute force up to machine precision. Specifically, we numerically compute the filtering PDFs (see, \cite[Algorithm 10.15]{Sarkka2019}) and their characteristic functions by trapezoidal rules at 2,000 spatial grids over a finite horizon where the true solution lies in. 
	
	We compare the moment filter to a bootstrap particle filter (10,000 samples with stratified resampling) and a Gaussian filter with Gauss--Hermite quadrature (of order 11). It is interesting to remark that the extended Kalman filters by definition do not work for this model, because the non-linear function in the Bernoulli parameter gives zero Kalman gain at the origin. Hence we do not compare to extended Kalman filters.
	
	We compute the approximation errors for the characteristic function of the filtering distribution. Denote $\varphi_k(z) \coloneqq \expec{\exp(\imag \, z \, X_k) \cond Y_{1:k}}$ as the true characteristic function, and $\widehat{\varphi}_k$ as the approximate. Then the error metric we use is $\sup_{z\in [-\gamma, \gamma]}\abs{\varphi_k(z) - \widehat{\varphi}_k(z)}$, where $\gamma=2$. For the moment filter and particle filter, we use its moments and samples, respectively to compute their approximate $\widehat{\varphi}_k$. For the Gauss--Hermite filter, its approximate characteristic function is given by that of the Normal distribution. To average the errors, we use 1,000 independent MC simulations, and in each simulation, we generate 100 measurements at evenly placed times $t_1 = 0.01, t_2 = 0.02, \ldots, t_{100} = 1$. Alongside the errors, we also present the running times of the filters. The running times are computed on a personal computer with Intel i9-10900K CPU. 
	
	\begin{figure}[t!]
		\centering
		\includegraphics[width=\linewidth]{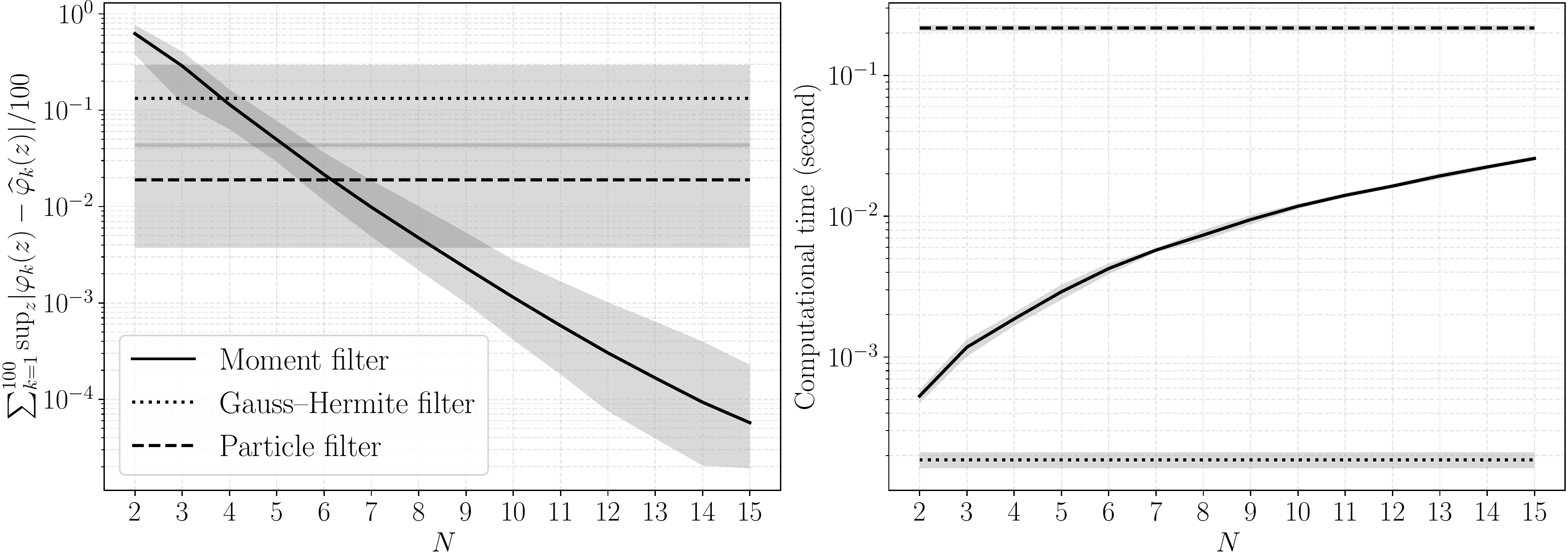}
		\caption{The estimation errors (left) and running times (right) of the filters for the Bene\v{s}--Bernoulli model in Equation~\eqref{equ:benes-bernoulli}. The shaded area plots the 0.95 quantile region computed from the 1,000 MC simulations.}
		\label{fig:benes-bernoulli-err}
	\end{figure}

	\begin{figure}[t!]
		\centering
		\includegraphics[width=\linewidth]{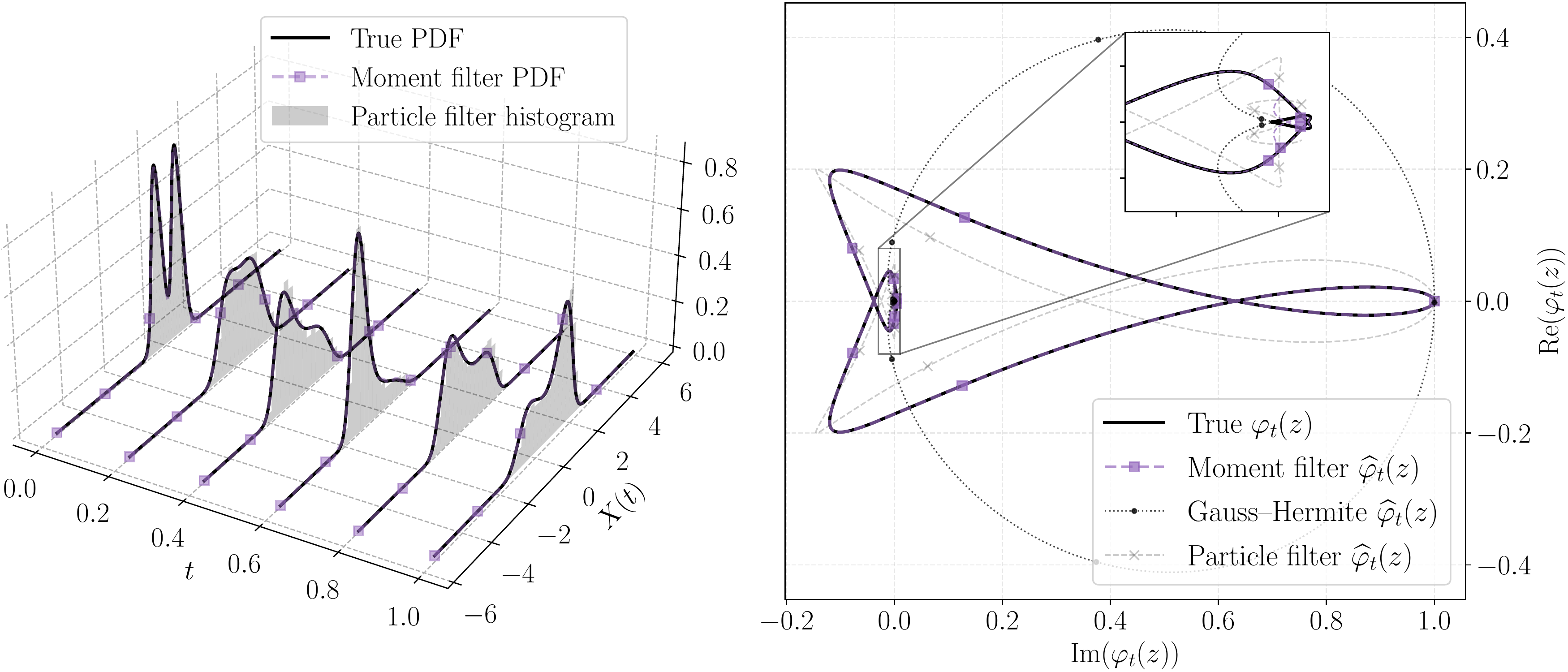}
		\caption{Demonstration of the filtering for the Bene\v{s}--Bernoulli model in Equation~\eqref{equ:benes-bernoulli}, with a fixed realisation of measurements. The left figure shows the evolution of the filtering PDFs. The right figure shows the characteristic functions at $t=0.8$ with $z\in[-9, 9]$. The PDF of the moment filter is numerically computed by the Fourier transform of the estimated characteristic function. The moment filter in this figure uses $N=15$, and its results overlap with the truth visually. }
		\label{fig:benes-bernoulli-demo}
	\end{figure}

	The estimation errors are shown on the left side of Figure~\ref{fig:benes-bernoulli-err}. We see that error of the moment filter is large when using low order of moments $N=2$. However, the error decreases in a near-linear speed (in log-scale) as we increase $N$, similar to the results in Section~\ref{sec:experiment-convergence}. At around $N=4$ and $N=6$, the error of the moment filter starts to be better than the Gauss--Hermite filter and particle filter, respectively. Meanwhile, as we increase $N$, the deviation of the moment filter error (i.e., the shaded area) enlarges too, but the deviation is smaller than other filters for most $N$. 
	
	The running times of the filters are shown in the right side of Figure~\ref{fig:benes-bernoulli-err}. We see that the Gauss--Hermite filter and particle filter are the fastest and slowest, respectively, with the moment filter in between the two. As we increase $N$, the speed of the moment filter decreases, but the decreasing speed is sub-linear (in log-scale). This is true, because the time complexity of the unidimensional moment filter is cubic in $N$ (see, Section~\ref{sec:filter-complexity}). At $N=15$, the accuracy of the moment filter is significantly better than the particle filter, while the speed is around ten times faster too. 
	
	In Figure~\ref{fig:benes-bernoulli-demo}, we demonstrate one simulation from the MC runs, and then plot the filtering PDF and characteristic function estimates. On the left figure, we see that the true PDFs are significantly non-Gaussian, while the estimated PDFs follow well the true PDFs. However, it is hard to see the differences between the moment filter and particle filter estimates. Hence, on the right figure, we plot the estimates for the characteristic function at $t=0.8$ which shows the difference more clearly. We find from the figure that the moment filter estimate is the closest to the truth, even at the tail around $\abs{z}=9$. On the other hand, the Gauss--Hermite filter's estimate is largely off, since it uses the Gaussian approximation. The particle filter has a better estimate than Gauss--Hermite, but its estimate deviates significantly from the truth compared to that of the moment filter.
	
	It is worth noting that the quadrature method in Section~\ref{sec:quadrature} with $N = 2$ is equivalent to Gauss--Hermite (GH) of order two when $d=1$. However, this does not mean that the moment filter with $N=2$ is the same as a Gaussian filter with GH of order two. The GH filter approximates the filtering distribution by Gaussian, while the moment filter does not. The two filters differ in computing Equations~\eqref{equ:filter-pred} and~\eqref{equ:filtering-moment-and-ell}, hence, they do not give the same results in Figure~\ref{fig:benes-bernoulli-err} at $N=2$.
	
	\begin{figure}[t!]
		\centering
		\includegraphics[width=\linewidth]{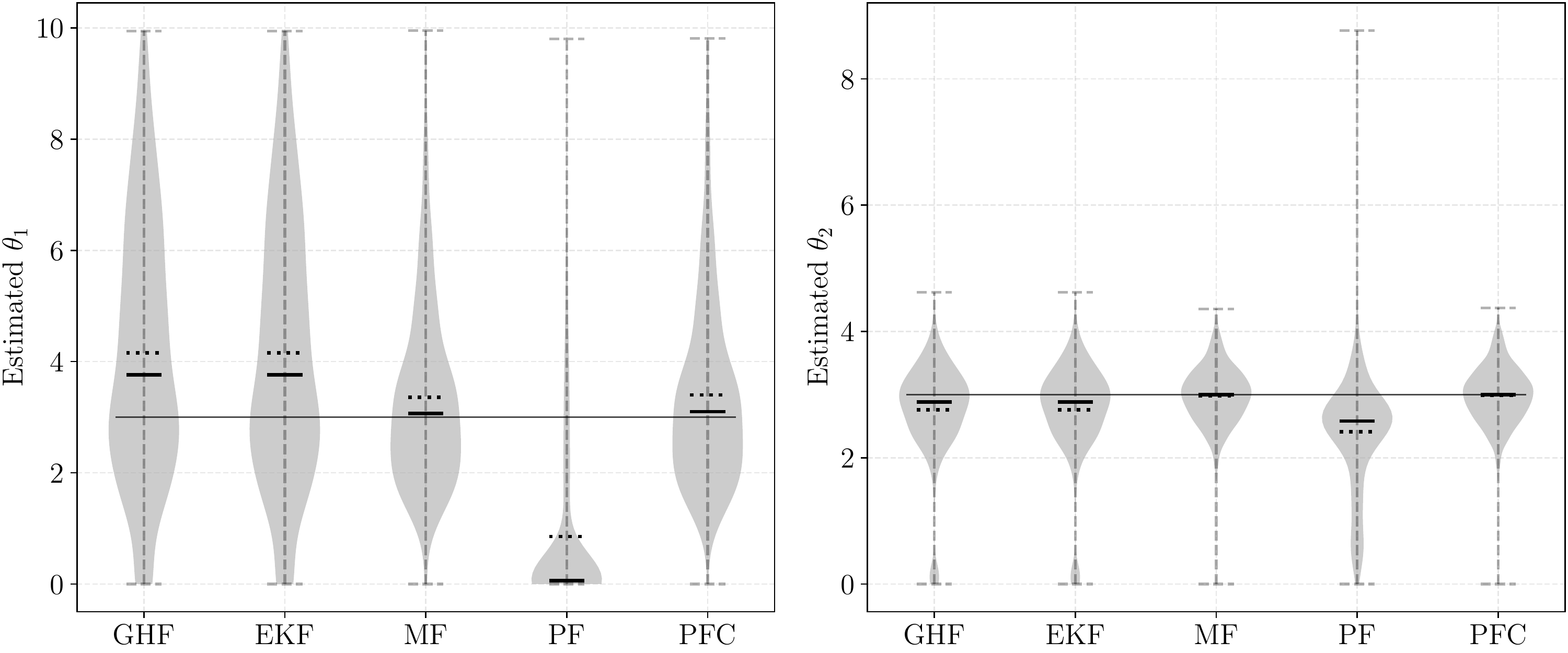}
		\caption{Estimates parameters for the model in Equation~\eqref{equ:well-poisson}. The grey violin plots show the distributions of the estimated parameters in all the MC runs, with the dashed whiskers representing the extrema. The solid and dotted black lines show the medians and means, respectively. The strike-through black lines represent the true parameter values. Moreover, in the 1,000 MC runs, GHF, EKF, MF, PF, and PFC have 105, 105, 34, 442, and 18 divergences, respectively.}
		\label{fig:param-est}
	\end{figure}
	
	\subsection{Parameter estimation}
	\label{sec:param-est}
	In this section we test the parameter estimation by minimising the negative log-likelihood produced by the filters (e.g., Equation~\eqref{equ:neg-ll}). The test model is
	\begin{equation}
		\begin{split}
			\diff X(t) &= X(t) \, \bigl( 1 - \theta_1 \, X(t)^2 \bigr) \diff t + \diff W(t), \\
			X(0) &\sim \frac{1}{2}\bigl( \mathcal{N}(-0.5, 0.05) + \mathcal{N}(0.5, 0.05)\bigr), \\ 
			Y_k \cond X_k &\sim \mathrm{Poisson}\Bigl( \log\bigl(1 + \exp(\theta_2 \, X_k)\bigr) \Bigr),
			\label{equ:well-poisson}
		\end{split}
	\end{equation}
	where the parameters are set to be $\theta_1=\theta_2=3$. To minimise the objectives, we use an L-BFGS-B optimiser with a positive bijection $x \mapsto \log(\exp(x) + 1)$ over the parameters which ensures positive estimates. The gradients with respect to the parameters are obtained by automatic differentiation. The initial values for the two parameters are uniformly set to be 0.1. 
	
	We conduct 1,000 independent MC runs, and in each MC run we generate 1,000 measurements at times $t_1 = 0.01, t_2 = 0.02, \ldots, t_{1000} = 10$. In addition, if the optimiser numerically diverges, or the estimated parameters exceed the threshold 10, then we mark the corresponding run as divergent. For comparison, we test the moment filter with $N=7$ against the Gauss--Hermite filter (GHF), the extended Kalman filter (EKF), and the bootstrap particle filter (PF). The configurations of these filters are the same as in Section~\ref{sec:benes-bernoulli}. However, the gradient produced by the PF with the standard resampling methods is biased. Hence, for a fair comparison, we use the continuous resampling method in~\cite{Corenflos2011} to compensate such biasedness. We abbreviate this particle filter with continuous resampling as PFC.
	
	The distribution of the estimated parameters are shown in Figure~\ref{fig:param-est}. We see that the estimates of the moment filter are evidently closest to the truth. In particular for $\theta_1$, the distribution of the moment filter estimates is more concentrated around the truth and is less tailed, compared to GHF and EKF. Moreover, for $\theta_2$, GHF and EKF have a few runs that give near-zero estimates, but, this problem does not appear for the moment filter. Among all the filters, PF is numerically the worst, as it does not give reasonable estimates for $\theta_1$, and the estimates for $\theta_2$ have large extrema and are heavily tailed. The performance of PFC is comparable to the moment filter, at the cost of using the continuous resampling. 
	
	In all the 1,000 MC runs, GHF, EKF, MF, PF, and PFC have 105, 105, 34, 442, and 18 divergences, respectively. This shows that the moment filter is relatively stable for this parameter estimation task compared to other filters. 
	
	\begin{figure}[t!]
		\centering
		\includegraphics[width=.7\linewidth]{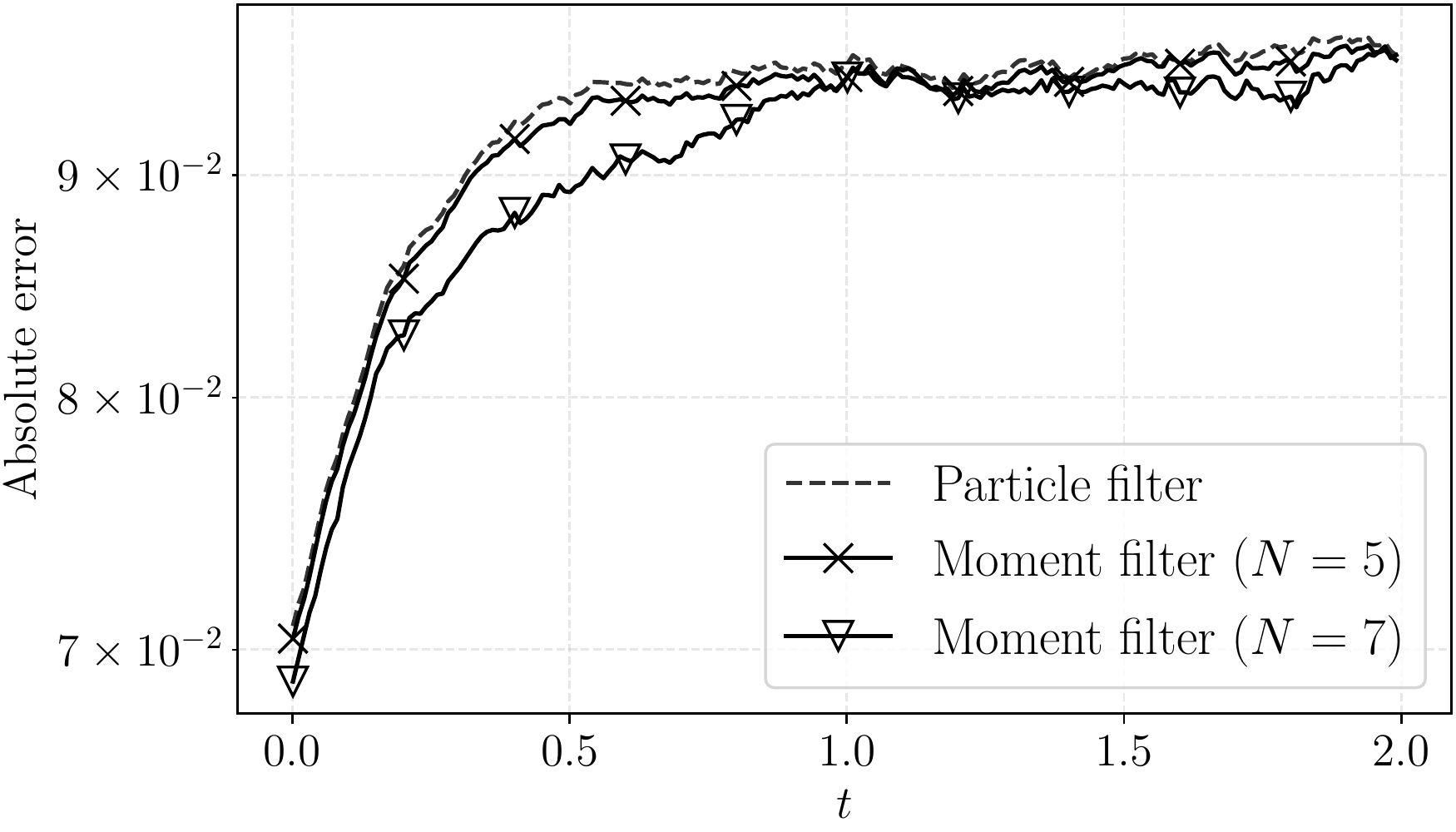}
		\caption{The estimation absolute errors for the prey-predator model in Equation~\eqref{equ:prey-predator}.}
		\label{fig:prey-predator}
	\end{figure}
	
	\subsection{Prey-predator}
	\label{sec:prey-predator}
	To test the performance of the moment filter for multidimensional systems, we consider the following prey-predator model
	\begin{equation}
		\begin{split}
			\diff X^{(1)}(t) &= X^{(1)}(t) \, \bigl(\alpha - \beta \, X^{(2)}(t)\bigr) \diff t + \sigma \, X^{(1)}(t)\diff W^{(1)}(t), \\
			\diff X^{(2)}(t) &= X^{(2)}(t) \, \bigl( \zeta \, X^{(1)}(t) - \gamma \bigr) \diff t + \sigma \, X^{(2)}(t)\diff W^{(2)}(t), \\
			Y_k \cond X_k &\sim \mathrm{Poisson}\Bigl( 1 \, / \, \bigl(1 + \exp\bigl(-\bigl(X^{(1)}_k\bigr)^3 + 1\bigr) \bigr) \Bigr), 
			\label{equ:prey-predator}
		\end{split}
	\end{equation}
	where $\alpha = \beta=\zeta=\gamma=4$, and $\sigma=0.1$. The prey-predator equation is commonly used by computational ecologist for modelling population growth, where $X^{(1)}$ and $X^{(2)}$ represent the populations of the preys and predators, respectively. To ensure the states be positive, we let the initial distribution of the SDE be a Normal with mean ones and small diagonal covariance scaled by $10^{-3}$. We simulate this model with 10,000 MC runs at 2,000 times $t_1 = 0.001, t_2 = 0.002, \ldots, t_{2000} = 2$. However, for this model it is hard to compute the true filtering solution. Hence, we instead compute the projection error $\expec{\norm{X(t_k) - \widehat{m}_{k, 1}}_1}$ at every time $t_k$, where $\widehat{m}_{k, 1}$ is the estimated filtering mean, and $\norm{\cdot}_1$ is the absolute norm. We use the same particle filter as in the previous section, for comparison.
	
	The results are shown in Figure~\ref{fig:prey-predator}. We see that the moment filter with $N=5$ is already slightly better than the particle filter. With $N=7$, the moment filter outperforms the particle filter. In average, the particle has absolute error $9.21\times 10^{-2}$, while the moment filter with $N=5$ and $N=7$ have error $9.15\times 10^{-2}$ and $9.03\times 10^{-2}$, respectively. However, we remark that in all the 10,000 MC runs, the moment filter numerically diverges 909 and 4,746 times for $N=5$ and $N=7$, respectively. The filter diverges because the Gram matrices generated by the moments are numerically not positive definite. As a contrast, all the particle filter runs are stable. This shows that although the moment filter is more performant than the particle filter, the moment filter can be numerically unstable depending on the model. 
	
	\section{Conclusions}
	\label{sec:conclusions}
	In this paper, we have developed a stochastic filter that represents the filtering distributions by approximate moments. Moreover, we proved that this filter is asymptotically exact, in the sense that the approximation converges to the truth in both distribution and moments, in the number of moments used. To make the moment filter computable, we have also developed a moment-based quadrature method based on the finite-matrix representation of multiplication operators. Our experiments showed that the moment filter numerically converges to the true solutions, and the filter outperforms a number of commonly used filters in terms of both computation time and estimation error. 
	
	We would also like to remark the limitation of the method for high-dimensional systems. As shown in Section~\ref{sec:filter-complexity}, the computational complexity of the moment filter does not scale well in the state dimension, due to the Cartesian-product construction of quadrature rules. But on the other hand, as we have mentioned in Remark~\ref{remark:quadrature-independence}, the convergence of the moment filter is independent of the used quadrature method. Hence, to reduce the computation, we can replace the moment quadrature with any that is more efficient, for example, by introducing a sparse version of Algorithm~\ref{alg:moment-filter}. It is reasonable to come up with a sparse routine, since from Figure~\ref{fig:nd-quadrature-rules} we evidently see plenty of quadrature weights that are too small to be useful. Another problem for high-dimensional systems is the numerical stability. As shown in the experiment in Section~\ref{sec:prey-predator}, the moment filter can numerically diverge due to non-positive-definite Gram matrices. This problem can be numerically solved by modified Cholesky decompositions to certain extent, but it is unclear whether this breaks the convergences. Therefore, developing a more efficient and numerical stable moment quadrature method is an important topic for future investigations.
	
	The convergence analysis in Section~\ref{sec:filter-convergence} proves that the moment filter converges, but it does not explicitly quantify the approximation error. For future works, it is interesting to measure how the error accumulates in time, and also to see if the error has a finite bound. Figure~\ref{fig:convergence-lg} numerically shows that the error has a finite bound for a specific model. 
	
	\section{Related works}
	\label{sec:related-works}
	The essence of the paper consists in representing the filtering distributions by a sequence of moments. On the special case of continuous-time measurements, this idea translates into projecting the Kushner--Stratonovich equation solution onto a finite-dimensional basis spanned by the moments. As an example, in~\cite[Equ. 15]{Kushner1967} and~\cite[Sec. 7]{Brigo1999} they show a system of differential equations of moments, so that we can approximately compute the moments by simulating the moment equations. These moment equations, however, have intractable expectations, hence, they in addition approximate the filtering distributions by, for instance, Gaussian~\cite{Kushner1967} or exponential families~\cite{Brigo1999}. Using such approximation makes their approximate moments consistent but does not guarantee the approximate moments converge to the truth. On the other hand, we can in principle solve these moment equations by the moment quadrature as well. This gives a convergent continuous-time filter, at the cost of losing the consistency using finite number of moments. As such, our work in this paper can be seen as a convergent projection filter for discrete-time measurements. 
	
	The work in~\cite{Luo2016} (cf.~\cite{Boutayeb1997}) similarly solves the filtering problem by moments, except that they apply Taylor expansions to solve the moment-quadrature problem and aim for the continuous-time filtering. However, as we have argued in Section~\ref{sec:quadrature}, applying Taylor expansions to solve the moment quadrature problem imposes strict conditions on the integrands and the underlying distribution, otherwise the quadrature does not converge. During the initial phase of working on our paper, we have experimented with applying Taylor expansions, but the resulting filters numerically diverge for all the models that we show in the experiment section. Moreover, using high-order Taylor expansions is computationally hard, even with the aid of automatic differentiation. The experiments in~\cite{Luo2016} have demonstrated their method with moments of orders up to three, while ours can be up to thirty handily. It is also unclear whether the filter in~\cite{Luo2016} is in theory convergent.
	
	The Edgeworth series are common tools to approximate probability density functions by moments, and they are used in the filtering context as well~\cite{Challa2000, Singer2008}. The idea is to find a reference density function, and then approximate the filtering density by a product of the reference density and a polynomial of moments. If the reference density is chosen as a Gaussian, then Equations~\eqref{equ:filter-pred} and~\eqref{equ:filtering-moment-and-ell} can be approximated by Gauss--Hermite rules and a modification of the integrands. Essentially, this quadrature amounts to an importance integration, where the reference distribution is the importance distribution. However, in the filtering applications, it is in fact the Gram--Charlier series in~\cite{Challa2000, Singer2008}, since the Edgeworth expansion asks to decompose a random variable into independent and identically distributed ones. The convergence of the Gram--Charlier series imposes strict conditions on the tail of the distribution which limits the application of this method. 
	
	In~\cite{Kloeden2017}, they apply the same unidimensional Gaussian quadrature as in Section~\ref{sec:quadrature} for approximating the distributions of SDE solutions. In a sense, solving the SDE is a special filtering problem but without the measurement variables. Hence, the work~\cite{Kloeden2017} is seen as a special case of our method by discarding the update step in Algorithm~\ref{alg:moment-filter} for unidimensional systems.
	
	There are a number of studies that extend the Gaussian-approximate filters by leveraging high-order moments. As an example, in~\cite[Sec. III.C]{Julier2002}, \cite{Tenne2003}, \cite[Sec. V.E]{Menegaz2015}, and~\cite{Ponomareva2013}, they modify the unscented transform to additionally make use of the skewness and kurtosis. However, the resulting filters are still under the hood of Gaussian approximations to the true distributions. The fourth-order moment quadrature method in~\cite{Easley2021} is not restricted to the Gaussian approximation. Specifically, they represent the moments by tensors, and then use tensor decompositions of the moments to compute the quadrature rules. In principle, the quadrature method by~\cite{Easley2021} can be applied for the filtering problem as well, but it is unclear how to systematically derive the quadrature rules for moments higher than the fourth order. Even if it is possible to go beyond the fourth moment, representing the moments by tensors and computing the tensor decompositions are memory-consuming and computationally demanding. 
	
	In short, compared to the existing works, our contributions are significant in terms of the convergence and computation. 
	
	\section*{Authors' contributions}
	Zheng Zhao came up with the idea of the paper, did all the experiments, and wrote the initial draft. Juha Sarmavuori developed the moment quadrature methods and demonstrated them in Matlab. Juha Sarmavuori proved the convergence of the moment quadrature, and Zheng Zhao proved the convergence of the moment filter.
	
	\section*{Acknowledgments}
	The authors would like to thank Adrien Corenflos for his technical suggestions on the convergence, moment matrix completion, and continuous resampling, as well as Sebastian Mair, Jens Sj\"{o}lund, and Muhammad F. Emzir for their valuable comments.
	
	\bibliographystyle{siamplain}
	\bibliography{refs}
\end{document}